\newcommand{\cO}{{\mathcal{O}}}
\newcommand{\term}[1]{#1}
\newcommand{\yes}{{yes}}
\newcommand{\yesinstance}{\yes-instance\xspace}
\newcommand{\yesinstances}{\yes-instances\xspace}
\newcommand{\twetamod}{treewidth-$\eta$-modulator\xspace}
\newcommand*\samethanks[1][\value{footnote}]{\footnotemark[#1]}
\newcommand{\constPower}{\mu}
\newcommand{\consteps}{\varepsilon}
\newcommand{\constbidimens}{\beta}
\newtheorem{theorem}{Theorem}[section]
\newtheorem{lemma}[theorem]{Lemma}
\newtheorem{proposition}[theorem]{Proposition}
\newtheorem{observation}[theorem]{Observation}
\theoremstyle{definition}
\newtheorem{definition}[theorem]{Definition}
\newcommand{\tw}{\operatorname{tw}}
\definecolor{Black}{rgb}{0,0, 0}
\definecolor{mygreen}{rgb}{0, .6, 0}
\definecolor{Blue}{rgb}{0, 0 ,1}
\definecolor{Red}{rgb}{1, 0 ,0}
\definecolor{Other}{rgb}{.1, .6,.7}
\definecolor{Otherother}{rgb}{.9, .3,.4}
\definecolor{Brown}{rgb}{0.5, 0.3, 0.3}
\definecolor{Magenta}{rgb}{0.5, 0, 1}
\definecolor{White}{rgb}{1, 1, 1}
\newcommand{\ProblemFormat}[1]{{\sc #1}}
\newcommand{\ProblemIndex}[1]{\index{problem!\ProblemFormat{#1}}}
\newcommand{\ProblemName}[1]{\ProblemFormat{#1}\ProblemIndex{#1}\xspace}
\newcommand{\probVC}{\ProblemName{Vertex Cover}}
\newcommand{\probFVS}{\ProblemName{Feedback Vertex Set}}
\newcommand{\probCycPacking}{\ProblemName{Cycle Packing}}
\newcommand{\probIS}{\ProblemName{Independent Set}}
\newcommand{\probrDS}{\ProblemName{$r$-Dominating Set}}
\newcommand{\probrSC}{\ProblemName{$r$-Scattered Set}}
\newcommand{\probCDS}{\ProblemName{Connected Dominating Set}}
\newcommand{\probCVC}{\ProblemName{Connected Vertex Cover}}
\begin{document}

\title{Bidimensionality and Kernels\thanks{Part of the results of this paper have appeared in \cite{F.V.Fomin:2010oq}.}~$^,$\thanks{Emails of authors: \texttt{fomin@ii.uib.no}, \texttt{daniello@ucsb.edu}, \texttt{saket@imsc.res.in}/\texttt{Saket.Saurabh@ii.uib.no}, \texttt{sedthilk@thilikos.info}.}}
\author{
  Fedor V. Fomin\thanks{Department of Informatics, University of Bergen, Norway.}~$^,$\thanks{Supported by the European Research Council under the European Union’s Seventh Framework Programme (FP/2007-2013) / ERC Grant Agreements no. 267959 and by the Norwegian Research Council via MULTIVAL project.}~$^,$\thanks{Supported by the Research Council of Norway and the French Ministry of Europe and Foreign Affairs, via the Franco-Norwegian project PHC AURORA 2019.} 
 \and
  Daniel Lokshtanov\thanks{University of California Santa Barbara, Goleta, CA, USA. }
  \and 
  Saket Saurabh\samethanks[3]~$^,$\thanks{The Institute of Mathematical Sciences, HBNI, Chennai, India.}~$^,$\thanks{Supported by the European Research Council under the European Union’s Seventh Framework Programme (FP/2007-2013) / ERC Grant Agreements no. 306992.}
    \and 
    Dimitrios M. Thilikos\samethanks[5]~$^,$\thanks{LIRMM, Univ Montpellier, CNRS, Montpellier, France.}~$^,$\thanks{Supported by projects DEMOGRAPH (ANR-16-CE40-0028) and ESIGMA (ANR-17-CE23-0010).}}

\date{}

\maketitle

\begin{abstract}

\noindent  {\sl Bidimensionality Theory}   was introduced by [{\sc E.~D. Demaine, F.~V. Fomin, M.~Hajiaghayi, and D.~M. Thilikos}. {\em  Subexponential
  parameterized algorithms on graphs of bounded genus and {$H$}-minor-free
  graphs}, J. ACM, 52 (2005), pp.~866--893] as a tool to obtain {\sl sub-exponential} time parameterized algorithms  on $H$-minor-free graphs. In [{\sc E.~D. Demaine and M.~Hajiaghayi}, {\em Bidimensionality: new connections
  between {FPT} algorithms and {PTAS}s}, in Proceedings of the 16th Annual
  ACM-SIAM Symposium on Discrete Algorithms (SODA), SIAM, 2005, pp.~590--601]  this theory was extended in order to obtain polynomial time approximation schemes (PTASs) for bidimensional problems. In this work, we establish a third meta-algorithmic direction for bidimensionality theory by relating it to the existence of {  linear kernels} for parameterized problems.
 In particular, we prove that every minor (respectively contraction) bidimensional problem that satisfies a separation property and is 
 expressible in Countable Monadic Second Order Logic (CMSO), admits a linear kernel for classes of graphs that exclude a fixed graph (respectively an apex graph) $H$ as a minor. 
%
%
%
 Our results imply that a multitude of bidimensional problems
  admit linear kernels on the corresponding graph classes. For most of these problems no polynomial kernels on $H$-minor-free graphs were known prior to our work.

\end{abstract}

\noindent{\bf Keywords:} Kernelization, Parameterized algorithms,  Treewidth, Bidimensionality

\section{Introduction}\label{sec_intro}

  Bidimensionality theory was  introduced by Demaine et al. in~\cite{DemaineFHT05jacm}.
  This theory is build on  cornerstone theorems from Graph Minors Theory of Robertson and Seymour \cite{RobertsonS-V} 
   and  initially  it was developed to  unify and extend subexponential fixed-parameter algorithms
  for {\sf NP}-hard graph problems to a  broad range of graphs including planar graphs,  map graphs, bounded-genus graphs and graphs excluding any fixed graph as a minor     \cite{DemaineFHT05jacm,DemaineH07-CJ,DemaineFHT05sidma, DemaineFHT05talg,DemaineH05II} (see also~\cite{FominLRS10,FominLS18excl,GrigorievKT14bidi,BasteT17cont} for other graph classes).
 Roughly speaking, the problem is bidimensional if the solution value for  the problem on a $k\times k$-grid is $\Omega(k^2)$, and  contraction/removal of edges does not increase solution value.  Many natural problems are bidimensional, including
{\sc Dominating Set}, {\sc Feedback Vertex Set}, {\sc Edge Dominating Set}, {\sc Vertex
Cover}, {\sc $r$-Dominating Set}, {\sc Connected Dominating Set}, {\sc Cycle Packing},
{\sc Connected Vertex
Cover}, {\sc  Graph Metric TSP}, and many others.  

 The second application of bidimensionality was given by Demaine and Hajiaghayi  
 in \cite{DemaineHaj05}, where it has been shown that bidimensionality is a useful theory not only in the design of fast fixed-parameter algorithms  but also in the  design of fast PTASs.   Demaine and Hajiaghayi established a  link between parameterized and approximation algorithms by   proving that every bidimensional problem satisfying some simple separation properties has a PTAS on planar graphs and other classes of sparse graphs. See also \cite{FominLRS10,FominLS18excl} for further development of the applications of bidimensionality in the theory of EPTASs.
We refer to the surveys~\cite{DemaineH07-CJ,DornFT08-csr,Thilikos15bidi} for further information on bidimensionality and its applications, as well as the  book \cite{Cygan15_book}.  

In this work we give the third application of bidimensionality, that is {\sl kernelization}.
Kernelization  can be seen as
 the strategy of analyzing  preprocessing or data reduction heuristics from a parameterized complexity perspective.
  Parameterized complexity introduced by Downey and Fellows is basically a two-dimensional generalization of ``{\sf P}\!~vs.\!~{\sf NP}''  where, in addition
to the overall input size $n$, one studies the effects on computational complexity of a secondary measurement
that captures additional relevant information.  This additional information can be the solution size or the quantification of some structural restriction on the input, such as   the treewidth or the genus 
of the input graph.  The secondary information is quantified by a positive integer $k$ and is
called the {\em parameter}.  Parameterization can be deployed in many different ways; for general background
on the theory see \cite{Cygan15_book,DowneyFbook13,FlumGrohebook,Niedermeierbook06}.

A parameterized  problem with a parameter $k$ is said to admit a {\it polynomial kernel} if there is a polynomial time algorithm (the degree of polynomial is independent of $k$), called a {\em kernelization} algorithm, that reduces the input instance down to an instance with size bounded by a polynomial $p(k)$ in $k$, while preserving the answer.
Kernelization has been extensively studied in parameterized complexity,  resulting in polynomial kernels for a variety of problems. Notable examples of known kernels are a $2k$-sized vertex kernel for {\sc Vertex Cover}~\cite{ChenKJ01}, a $355k$ vertex-kernel for {\sc Dominating Set} on planar graphs~\cite{AFN04}, which later was improved to a $67k$ vertex-kernel~\cite{ChenFKX07}, or an $\cO(k^2)$ kernel for {\sc Feedback Vertex Set}~\cite{fvs-kernel:talg}  parameterized by the solution size.
One of the  most intensively studied  directions in kernelization is
the study of problems on planar graphs and other classes of sparse graphs. This study was initiated by
  Alber et al.~\cite{AlberFN04} who  gave the first linear-sized kernel  for the {\sc Dominating Set} problem
on planar graphs. The work of Alber et al.~\cite{AlberFN04} triggered an explosion
of papers on kernelization, and kernels of linear sizes were obtained for a variety of parameterized problems  on planar graphs including
\textsc{Connected
Vertex Cover, Minimum Edge Dominating Set, Maximum Triangle
Packing, Efficient Edge Dominating Set, Induced Matching,
Full-Degree Spanning Tree, Feedback Vertex Set, Cycle Packing}, and
\textsc{Connected Dominating Set}
\cite{AlberFN04,BodlaenderP08,BodlaenderPT08,ChenFKX07,GuoNW06,KanjPXS08,LokshtanovMS09,MoserS07}.
 We refer to the surveys~\cite{GuoN07,kratsch2014EATCS,MisraRS11} as well as the recent textbook  \cite{Cygan15_book} for a detailed treatment of the area of kernelization.
Since most of the problems known to have polynomial kernels on planar graphs are bidimensional, the existence of links between bidimensionality and kernelization was conjectured and left as an open problem  in~\cite{DemaineFHT05jacm}.


\medskip  In this work we show that  every  bidimensional problem with a simple separation property, which is a weaker property than the one required in the framework of Demaine and Hajiaghayi for PTASs  \cite{DemaineHaj05}  and which is expressible in the language of Counting Monadic Second Order Logic (CMSO) (we postpone these definitions till the next section) has a linear kernel on planar and even much more general classes of graphs. In this paper all the problems
are parameterized by the {solution size}. Our main result is the following meta-algorithmic result.
%

\begin{theorem}\label{thm:main_result_bidim}
Every CMSO-definable linear-separable minor-bidimensional problem $\Pi$ admits  a linear kernel on graphs excluding some fixed graph as a minor.
Every CMSO-definable linear-separable contraction-bidimensional problem $\Pi$ admits  a linear kernel on graphs excluding some fixed apex graph as a minor.
\end{theorem}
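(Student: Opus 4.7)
The plan is to follow the two-step template of \emph{protrusion decomposition} followed by \emph{protrusion replacement}, with bidimensionality feeding the first step and CMSO-definability feeding the second. Concretely, for a yes-instance $(G,k)$ with $G$ excluding a fixed graph $H$ as a minor, I would aim to produce a decomposition $V(G)=Y_0\cup Y_1\cup\cdots\cup Y_t$ with $|Y_0|=O(k)$, $t=O(k)$, and each $Y_i$ for $i\ge 1$ a protrusion satisfying $|\partial Y_i|\le\eta$ and $\tw(G[Y_i\cup\partial Y_i])\le\eta$ for some constant $\eta$ depending only on $H$ and $\Pi$; the kernel is then obtained by trimming each $Y_i$ to constant size.

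The route to such a decomposition is via a \twetamod\ of linear size. The minor-bidimensionality of $\Pi$, combined with the excluded-grid theorem on $H$-minor-free graphs, ensures that any region of treewidth $\omega(\sqrt{k})$ contains a $\Omega(\sqrt{k})\times\Omega(\sqrt{k})$ grid as a minor and therefore witnesses $\Pi(G)>k$; in particular, yes-instances have $\tw(G)=O(\sqrt{k})$. Using this, one can compute a constant-factor approximate solution $S$ with $|S|=O(k)$ (or correctly answer \no). The linear-separability hypothesis is then used to carve out a modulator $X\supseteq S$ with $|X|=O(k)$ such that every component of $G-X$ has treewidth at most $\eta$: separability is what makes the dependence on $k$ linear rather than $O(k\cdot\poly\log k)$, because it lets one recurse on balanced small separators around $S$ and charge the separator sizes geometrically to $|S|$. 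Standard decomposition machinery for $H$-minor-free graphs (bounded average degree, the structure theorem of Robertson and Seymour) then converts the pair $(G,X)$ into the desired protrusion decomposition with $t=O(k)$ parts, by grouping the components of $G-X$ according to their interaction with $X$.

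For the replacement step, CMSO-definability of $\Pi$ yields a finite-state property on boundaried graphs of treewidth at most $\eta$ and boundary at most $\eta$: there are only finitely many $\Pi$-equivalence classes of such boundaried graphs, and each class admits a canonical representative of size at most $c(\eta,\Pi)$. Substituting every protrusion $Y_i$ of the decomposition by its representative preserves the answer and yields an equivalent instance of total size $|Y_0|+t\cdot c(\eta,\Pi)=O(k)$, which is the sought linear kernel. The contraction case is treated by exactly the same pipeline, but the host class must be \emph{apex}-minor-free: apex-freeness is precisely the condition under which large treewidth forces a large grid \emph{contraction}, which is what is needed for contraction-bidimensionality to deliver a lower bound on $\Pi(G)$. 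The main obstacle I expect is Step~1, producing a genuinely linear-sized treewidth modulator from bidimensionality and linear separability alone without incurring polylogarithmic losses in the recursion; this is where the separation property must be used in an essential, quantitatively tight way, and where the interplay with the excluded-minor structure is most delicate.
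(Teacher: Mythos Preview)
Your high-level pipeline (protrusion decomposition followed by protrusion replacement, with bidimensionality feeding the first and CMSO the second) matches the paper's, but there is a genuine gap in the replacement step and a substantive difference in the decomposition step.

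\textbf{The gap.} You claim that ``CMSO-definability of $\Pi$ yields a finite-state property on boundaried graphs \ldots\ there are only finitely many $\Pi$-equivalence classes of such boundaried graphs.'' This is not correct as stated. CMSO-definability gives finite state for the \emph{annotated} predicate $\phi$ (finitely many $\equiv_\phi$-classes of annotated boundaried graphs), but what protrusion replacement needs is \emph{Finite Integer Index} for the parameterized problem $\Pi$: finitely many $\equiv_\Pi$-classes of (plain) boundaried graphs, where $\mathbf{G}_1\equiv_\Pi\mathbf{G}_2$ requires a transposition constant $c$ with $(\mathbf{G}_1\oplus\mathbf{F},k)\in\Pi\Leftrightarrow(\mathbf{G}_2\oplus\mathbf{F},k+c)\in\Pi$ for all $\mathbf{F},k$. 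Passing from finite state to FII is exactly where separability is used a second time (Theorem~\ref{fiiopoiok}): separability bounds, for each $\equiv_\phi$-representative, the offset between the cheapest solution realizing that class inside $\mathbf{G}$ and $OPT_\Pi(G)$, forcing only finitely many possible ``signatures'' $\chi_\mathbf{G}$. Without this argument your replacement step does not go through; CMSO alone is not enough.

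\textbf{The difference in the decomposition.} The paper neither computes an approximate solution nor invokes the Robertson--Seymour structure theorem. All it uses from the excluded-minor hypothesis is the SQGM property (treewidth $\le \lambda t^c$ with $c<2$ when $\boxplus_t$ is excluded as a minor, linear for $H$-minor-free classes). The linear \twetamod\ (Lemma~\ref{lem:bidimTwModulator}) and the protrusion decomposition (Lemma~\ref{lem:protrus_decomp}) are obtained by a direct induction using $2/3$-balanced separators of order $\tw(G)+1=O(k^\lambda)$ with $\lambda<1$; the sublinear exponent is precisely what makes the recursion sum geometrically and avoids the polylogarithmic losses you were worried about. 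Your route via an approximate solution plus structure theorem may be workable, but it imports unnecessary machinery and leaves the approximation step unjustified; the paper's argument is both lighter and cleaner here.
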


Theorem~\ref{thm:main_result_bidim} implies the existence of linear kernels for many parameterized problems on apex-minor-free or minor-free graphs. For example, it implies that 
\textsc{Treewidth-$\eta$-Modulator}, which is to decide whether an input  graph can be turned into a graph of treewidth at most $\eta$, by removing at most $k$ vertices, admits a kernel of size $\cO(k)$ in $H$-minor-free graphs. Other applications of this theorem are linear kernels for   \probrDS, \probCDS, \probCVC,  \probIS, or \probrSC on apex-minor-free graphs and for  \probCycPacking and \probFVS on  minor-free graphs, as well as for many other packing and covering problems. 
For many of these  problems these are the first polynomial kernels on such classes of graphs. 

\paragraph{High level overview of the main proof ideas.}

Our approach is built on the work of Bodlaender et al.~\cite{BodlaenderFLPST16meta} who proved the first meta-theorems on kernelization. 
The results in \cite{BodlaenderFLPST16meta} imply that every parameterized problem that has finite integer index and satisfies an additional surface-dependent property, called a quasi-coverability property, has a linear kernel on graphs of bounded genus.

The kernelization framework in \cite{BodlaenderFLPST16meta} is based on the following idea.
Suppose that  every \yesinstance\ of a given parameterized problem admits a protrusion decomposition.  In other words, suppose that the vertex set of an input graph $G$ can be partitioned in 
sets $R_0, R_1, \dots, R_\ell$, where $|R_0|$ and $\ell$ are of size linear in parameter $k$, $R_0$ separates $R_i$ and $R_j$, for $1\leq i<j\leq \ell$, and every  set $R_i$, $i\in\{1,\ldots,\ell\}$, is a protrusion, i.e. induces a graph of constant treewidth with a constant number of neighbours in $R_0$. 
Then the kernelization algorithm uses only one reduction rule, which is based on  finite integer index  properties of the problem in question, and replaces a protrusion with a protrusion of constant size.
 
%
%
%

In this paper we use  exactly the same reduction rule as the one used in the kernelization algorithm given in \cite{BodlaenderFLPST16meta} for obtaining kernels on planar graphs and graphs of bounded genus.  
The novel technical contribution of this paper is twofold.  First, we introduce a new 
  way the kernel sizes for bidimensional problems are analyzed.  The analysis of kernel sizes in~ \cite{BodlaenderFLPST16meta} requires ``topological" decompositions of the given graph, in the sense that the partitioning of the graph into regions with small border, or protrusions, strongly depends on the embedding of the graph into a surface. Then topological properties of the embedding are used to prove the existence of  a protrusion decomposition. 
While such an approach works well when we have a topological embedding it seems difficult to extend it to graphs excluding some fixed graph as a minor.
Instead of taking the topological approach, we apply bidimensionality and suitable variants of the
 Excluded Grid Theorem~\cite{DemaineH08,FominGT11cont}. Roughly speaking, we show that bidimensionality and separability implies the existence of a protrusion decomposition. 
This makes our arguments not only much more general but also considerably simpler than the analysis in~\cite{BodlaenderFLPST16meta}. Our second technical contribution is the proof that 
every CMSO-definiable separable problem has a finite integer index.
Pipelined with the framework from \cite{BodlaenderFLPST16meta}, these results imply the proof of Theorem~\ref{thm:main_result_bidim}.

The remaining part of the  paper is organized as follows. In Section~\ref{sec:prelims}, we provide definitions and notations used in the remaining part of the paper. 
 Section~\ref{sec:decprotrusions}  is devoted to combinatorial lemmata on separation properties of bidimensional problems. Based on these combinatorial lemmata, we prove a novel decomposition theorem (Theorem~\ref{thm:protrusiondecomp}),  which is  the first main technical  contribution of this work.  In Section~\ref{sec:FII}, we prove the second main technical contribution of the paper (Theorem~\ref{fiiopoiok})  about the finite integer index of  separable CMSO-optimization problems. 
 In Section~\ref{sec:puttin}, we prove the main result of the paper about linear kernels.
In the  concluding  Section~\ref{sec:conclusion} we discuss the connection between   the separable-bidimensional property of a problem and  the quasi-coverable property, which was used in   the meta-theorem  from \cite{BodlaenderFLPST16meta}. 

\paragraph{Relevant results}
Let us  provide a brief overview of the relevant results appeared since 2010, when the conference version of this paper was published. 
 The properties of {SQGM} and {SQGC} graph classes were used to design approximation, FPT, counting,  and kernelization algorithms on various graph classes in 
 \cite{FominLS18excl,FominLS18excl,GrigorievKT14bidi,BasteT17cont,KimST18data}. The issues of constructiveness of the kernelization algorithms provided in this paper is discussed by Garnero et al. in~\cite{GarneroPST15,Garnero2018}. Extension of some of our results to graphs excluding a topological minor is given by Kim et al.  \cite{KimLPRRSS16} (see also~\cite{KimST18data} for recent applications to counting problems). For \textsc{Dominating Set} or \textsc{Connected Dominating Set}, linear kernels obtained in this paper for apex-minor free were extended to much more general classes: 
  $H$-topological-minor-free graphs~\cite{FominLST18kern}.  For  \textsc{Dominating Set} linear kernels were obtained even for more general classes of graphs like graphs of bounded expansion \cite{Drange16}, see also \cite{EickmeyerGKKPRS17neig} for even more general results. Finally, see  \cite{GiannopoulouPRT17line} for kernelization
   results when excluding graphs under other partial ordering relations, different than minors.

%

\section{Preliminaries}\label{sec:prelims}

In this section we give various definitions  used in the paper.
We use $\Bbb{N}$ to denote the set of all non-negative integers and $\Bbb{Z}$ to denote the set of all integers.

\paragraph{Concepts from Graph Theory}
 Let~$G$ be a graph.  We use the notation $V(G)$ and $E(G)$ for the vertex set and  the edge set of $G$ respectively.  We say that a graph $H$ is a {\em subgraph} of $G$ if $V(H)\subseteq V(G)$ and $E(H)\subseteq E(G).$ Given a set $S\subseteq V(G)$ we denote by $G[S]$ the subgraph $G'$
of $G$ where $V(G')=S$ and $E(G')=\{xy \in E(G)\mid \{x,y\} \subseteq S\}$ and we call $G'$ the {\em subgraph of $G$ induced by $S$} or we simply say that $G'$ is an induced subgraph of $G.$

For every $S \subseteq V,$ we denote by  $G - S$ the graph obtained  from $G$ by removing the vertices of $S,$ i.e.  $G- S=G[V(G)\setminus S]$.  For vertex $v\in V(G)$ we also use $G-v$ for $G-\{v\}$.
For a set  $S \subseteq V(G)$, we define $N_G(S)$ to be the \emph{open neighborhood} of $S$ in $G$, which is the set of vertices from $V(G)\setminus S$ adjacent to vertices of $S$. The  \emph{closed neighborhood} of $S$ is $N_G[S]:=N(S)\cup S$.
Given a set $S\subseteq V(G)$,  we denote by $\partial_{G}(S)$ the set of all vertices in $S$ that are adjacent in $G$ with vertices not in $S$. Thus  $N_{G}(S)=\partial_{G}(V(G)\setminus S)$.
 
\paragraph{\bf Treewidth.}
A {\em tree decomposition} of a graph $G$ is a pair $\mathcal{T}=(T,\{X_i\}_{i\in V(T)})$, where $T$ is a tree whose every node $i$ is assigned a vertex subset $X_i\subseteq V(G)$, called a bag,
such that the following three conditions hold:
\begin{description}
\item[(T1)] $\bigcup_{i\in V(T)} X_i =V(G)$. In other words, every vertex of $G$ is in at least one bag.
\item[(T2)] For every $uv\in E(G)$, there exists a node $i$ of $T$ such that bag $X_i$ contains both $u$ and $v$.
\item[(T3)] For every  $u\in V(G)$, the set $T_u = \{i\in V(T) \mid  u\in X_i\}$  (i.e., the set of nodes whose corresponding  bags contain $u$) induces a subtree of $T$.
\end{description}
The {\em width} of a tree decomposition $\mathcal{T}=(T,\{X_i\}_{i\in V(T)})$  equals $\max_{i\in V(T)} |X_i| - 1$, that is, the maximum bag size, minus one. The {{\em treewidth}} of a graph $G$, denoted by $\tw(G)$, is the minimum possible width of a tree decomposition of $G$. To distinguish between the vertices of the decomposition tree $T$ and the vertices of the graph $G$, we will refer to the vertices of $T$ as {\em{nodes}}. Treewidth is can be seen as a measure of the topological resemblance of a graph to the structure of a tree. It has been used by Robertson and Seymour, in~\cite{RobertsonS3}, as a cornerstone parameter of Graph Minors  and its trace as a parameter goes back to the early 70s~\cite{Bertele72nons,Halin76sfun,Gavril74thei}.

\paragraph{Separators and separations}
Let $G$ be  a graph, $Q\subseteq V(G)$, and let $A_{1},A_{2}\subseteq V(G)$ such that $A_1 \cup A_2 = V(G)$. We say that the pair $(A_{1},A_{2})$ is a {\em separation} of $G$ if there is no edge with one endpoint in $A_{1}\setminus A_{2}$ and the other in $A_{2}\setminus A_{1}$. The {\em order} of a separation $(A_1,A_2)$ is $|A_1  \cap A_2|$. For a vertex subset $Q \subseteq V(G)$ we say that a separation  $(A_1,A_2)$ is a {\em $2/3$-balanced separation} of $(G,Q)$ if each of the parts $A_1 \setminus A_2$ and $A_2 \setminus A_1$ contains at most $\frac{2}{3}|Q|$  vertices of $Q$. Balanced separators have been extensively studied in the context of graph algorithms (see e.g.~\cite{LiptonT79,AST90,LiptonT80,AlonST94}).

The following separation property of graphs of small treewidth is well known, see e.g.  \cite{Bodlaender98,Cygan15_book}.

\begin{proposition}
\label{lemma:balsep22}
Let $G$ be a graph and let $S\subseteq V(G).$ There is a $2/3$-balanced separation $(A_1,A_2)$ of $(G,S)$ of order at most $\tw(G)+1.$ 
\end{proposition}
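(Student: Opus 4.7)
The plan is to fix an optimal tree decomposition $\mathcal{T}=(T,\{X_t\}_{t\in V(T)})$ of $G$ of width $\tw(G)$ and to use a weighted-centroid argument on $T$ to pick a single bag $X_{t^*}$ that will serve as the intersection $A_1 \cap A_2$ of the required separation. The crucial property of a tree decomposition is that, for any node $t^* \in V(T)$, the components $T_1,\dots,T_d$ of $T - t^*$ give vertex sets $V_i := \bigcup_{s \in V(T_i)} X_s$ that are pairwise disjoint outside $X_{t^*}$ (by (T3)) and admit no edge of $G$ between different pieces $V_i\setminus X_{t^*}$ and $V_j \setminus X_{t^*}$ (by (T2), since such an edge would force a bag that lies in both $T_i$ and $T_j$). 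Consequently, for any bipartition $\{1,\dots,d\} = I_1 \sqcup I_2$, setting $A_j := X_{t^*} \cup \bigcup_{i\in I_j} V_i$ automatically yields a separation of order $|X_{t^*}| \leq \tw(G)+1$, and it only remains to choose $t^*$ and the bipartition so that each side contains at most $\tfrac{2}{3}|S|$ vertices of $S$.

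To find a suitable $t^*$, I would assign each $v \in S$ to an arbitrary node $t_v$ whose bag contains $v$ and consider the weight function $w(t) := |\{v \in S : t_v = t\}|$, with total weight $|S|$. A standard centroid argument on the weighted tree $T$ yields a node $t^*$ such that every component of $T - t^*$ has total $w$-weight at most $|S|/2$. Because any $v \in S \setminus X_{t^*}$ lies in a connected subtree of $T$ avoiding $t^*$ (by (T3)), that subtree sits inside some unique $T_i$, so $t_v \in T_i$ and $v \in V_i$; hence $|V_i \cap S \setminus X_{t^*}|$ is bounded by the $w$-weight of $T_i$, and therefore by $|S|/2$.

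It remains to bipartition the indices with $s_i := |V_i \cap S \setminus X_{t^*}|$ satisfying $s_i \leq |S|/2$ and $\sum_i s_i \leq |S|$ so that each side sums to at most $\tfrac{2}{3}|S|$. If the largest term $s_1$ is at least $|S|/3$, I would place $\{1\}$ alone: that side has $s_1 \leq |S|/2 \leq \tfrac{2}{3}|S|$ and the other has total at most $|S| - s_1 \leq \tfrac{2}{3}|S|$. Otherwise every $s_i < |S|/3$, and the standard LPT-style greedy (process items in decreasing order, always adding to the lighter current side) finishes with heavier side $L$ satisfying $L \leq R + s_k$, where $s_k$ is the last item placed on $L$ (at which moment the then-value of $L$ was at most the then-value of $R$); combined with $L + R \leq |S|$ and $s_k < |S|/3$ this gives $L \leq (|S| + |S|/3)/2 = \tfrac{2}{3}|S|$.

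The main obstacle I expect is not any single step but correctly bridging the gap between what the centroid naturally delivers (weight at most $|S|/2$ in each subtree of $T - t^*$) and what the statement asks for ($\tfrac{2}{3}|S|$ on each side of the final \emph{two}-part separation): the combinatorial partitioning argument in the third paragraph is exactly what closes this gap. Every remaining piece---existence of a centroid, the ``no edges across different components'' property, and the order bound on the separation---is a direct consequence of (T1)--(T3) and the definition of treewidth.
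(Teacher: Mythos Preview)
Your proof is correct and follows the standard centroid-in-a-tree-decomposition argument. The paper itself does not prove this proposition at all; it merely states it as well known and cites Bodlaender~\cite{Bodlaender98} and the textbook~\cite{Cygan15_book}, whose proofs proceed exactly along the lines you outline (find a weighted centroid bag, then greedily bipartition the subtrees of $T-t^*$).
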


%

The following is an easy fact about treewidth, see e.g. \cite{Bodlaender98}.
\begin{proposition}
\label{tcompo}
The treewidth of a graph is the maximum treewidth of its connected components.
\end{proposition}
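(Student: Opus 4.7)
The plan is to prove the two inequalities separately. For the easy direction, I would show that the treewidth of every connected component $C$ of $G$ is at most $\tw(G)$. Given an optimal tree decomposition $(T,\{X_t\}_{t \in V(T)})$ of $G$, I would form the decomposition $(T,\{X_t \cap V(C)\}_{t \in V(T)})$ and verify that it satisfies (T1), (T2), (T3) for $C$: (T1) holds because every vertex of $C$ is covered in $G$; (T2) holds because every edge of $C$ is already an edge of $G$; and (T3) holds because intersecting with $V(C)$ preserves the subtree-connectivity property. Since bag sizes only shrink, this witnesses $\tw(C) \le \tw(G)$.

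For the reverse direction, I would combine optimal tree decompositions of the components into one tree decomposition of $G$. Let $C_1,\dots,C_r$ be the connected components of $G$ and let $(T_i,\{X_t^i\}_{t \in V(T_i)})$ be an optimal tree decomposition of $C_i$. I would form a new tree $T$ by taking the disjoint union of the $T_i$'s, adding a fresh node $t_0$ with bag $X_{t_0}=\emptyset$, and attaching $t_0$ by an edge to an arbitrary node of each $T_i$. The resulting pair $(T,\{X_t\}_{t \in V(T)})$ is a tree decomposition of $G$: (T1) is immediate from taking the union; (T2) holds because each edge of $G$ lies entirely within some component $C_i$ and is therefore covered inside $T_i$; and (T3) holds because each vertex $v \in V(C_i)$ occurs only in bags of $T_i$, where its occurrences form a connected subtree, and this subtree remains connected in $T$ since $t_0$ has an empty bag and does not contain $v$. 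The width of this decomposition is $\max_i \tw(C_i)$, so $\tw(G) \le \max_i \tw(C_i)$.

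Combining the two inequalities yields the equality claimed in the proposition. I do not expect a real obstacle here; the only subtlety worth being explicit about is checking condition (T3) when gluing the component decompositions through $t_0$, which is handled cleanly by giving $t_0$ an empty bag so that no vertex's occurrences are split.
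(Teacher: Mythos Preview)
Your proof is correct and is the standard argument for this elementary fact. Note, however, that the paper does not actually prove Proposition~\ref{tcompo}; it merely states it as a well-known fact and cites \cite{Bodlaender98}. So there is no ``paper's own proof'' to compare against---you have simply supplied the (routine) details that the authors chose to omit.
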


\paragraph{Minors and contractions}
Given an edge  $e=xy$ of a graph $G,$ the graph  $G/e$ is obtained from  $G$ by contracting the edge $e,$ that is, the endpoints $x$  and $y$ are replaced by a new vertex $v_{x,y}$ which  is  adjacent to the old neighbors of $x$ and $y$ (except
from $x$ and $y$).  A graph $H$ obtained by a sequence of edge-contractions is said to be a \emph{contraction} of $G.$  We denote it by $H\leq_{c} G$. A graph $H$ is a {\em minor} of a graph $G$ if $H$ is the contraction of some subgraph
of $G$ and we denote it by $H\leq_{m} G$. We say that a graph $G$ is {\em $H$-minor-free} when it does not contain $H$ as a minor. We also say that a graph class ${\cal G}$ is {\em $H$-minor-free} (or, excludes $H$ as a minor)  when
all its members are $H$-minor-free.
A graph $G$ is an \emph{apex graph} if there exists a vertex $v$ such that $G- v$ is planar. A graph class ${\cal G}$ is \emph{apex-minor-free} if there exists an apex graph $H$ such that  ${\cal G}$ is $H$-minor free.

A graph class ${\cal G}$ is said to be {\em subgraph-closed} ({\em minor-closed/contraction-closed}) if every subgraph (minor/contraction) of a graph in ${\cal G}$ also belongs to ${\cal G}$. 

\paragraph{Grids and triangulated grids.}
Given a $k\in\Bbb{N}$, we denote by $\boxplus_{k}$ the $(k\times k)$-grid that is  the graph with vertex set $\{(x,y) \mid  x,y \in\{1,\dots, t\}\}$ and where two different vertices $(x,y)$ and $(x',y')$ are adjacent if and only if $|x-x'|+|y-y'| = 1$. Notice that $\boxplus_k$ has exactly $k^2$ vertices.

For  $k\in\Bbb{N}$, the graph $\Gamma_k$ is obtained from the grid $\boxplus_k$ by adding, for all $1 \leq x,y \leq k-1$, the edge with endpoints $(x+1,y)$ and $(x,y+1)$ and additionally making vertex $(k,k)$ adjacent to all the other vertices $(x,y)$ with $x \in \{1,k\}$ or $y \in \{1,k\}$, i.e., to the whole perimetric border  of $\boxplus_k$. Graph $\Gamma_9$ is shown in Fig.~\ref{fig-gamma-reg}. The graph $\Gamma_{k}$ has been defined in~\cite{FominGT11cont} in the context of bidimensionality theory.

\begin{figure}
 \begin{center}
\scalebox{.6}{\includegraphics{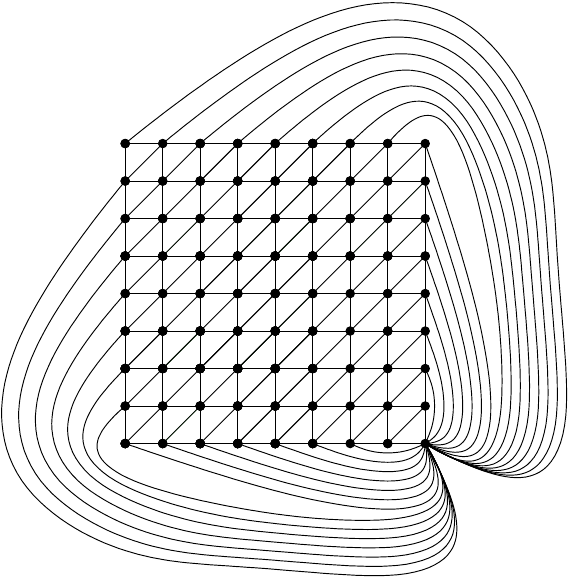}}
\caption{Graph $\Gamma_{9}$.}\label{fig-gamma-reg}
 \end{center}
 \end{figure}

 We also need the following result of Robertson and Seymour  \cite{RobertsonS3}.
 \begin{proposition}\label{prop:tw_grid}
 For every $k\geq 0$, $\tw(\boxplus_k)=k$.
 \end{proposition}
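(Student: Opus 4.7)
The plan is to prove the two inequalities $\tw(\boxplus_t) \le t$ and $\tw(\boxplus_t) \ge t$ separately.

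For the upper bound $\tw(\boxplus_t) \le t$, I would give an explicit path decomposition of width $t$ by a sliding-frontier construction. Enumerate the vertices of $\boxplus_t$ in column-major order $(1,1),(1,2),\dots,(1,t),(2,1),\dots,(t,t)$. At the $m$-th step, define the bag $B_m$ to consist of the $m$-th vertex together with every previously enumerated vertex that still has a not-yet-enumerated neighbor. A direct inspection of grid adjacencies shows that at every moment this ``frontier'' lies within two consecutive columns and has size at most $t$, so $|B_m|\le t+1$. The sequence $B_1,B_2,\dots$ forms a path decomposition: each vertex $v$ appears in a contiguous range of bags (from its own step until all its neighbors have been enumerated), and each edge $uv$ is witnessed at the step where the later endpoint is added, because the earlier endpoint is by definition still in the frontier. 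This yields width at most $t$.

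For the lower bound $\tw(\boxplus_t)\ge t$, the plan is to exhibit a bramble of order $t+1$ in $\boxplus_t$ and invoke the Seymour--Thomas duality theorem $\tw(G)=\mathrm{bn}(G)-1$. A natural choice is a family of ``crosses'': connected sets of the form $R_i\cup C_j$ (row $i$ together with column $j$), together with appropriate row/column unions so that every pair of elements either shares a vertex or is joined by an edge. The combinatorial heart of the argument is to show that no vertex set $X$ with $|X|\le t$ is a transversal of this family: if $|X|\le t$, a pigeonhole on rows and columns provides a cross essentially avoiding $X$. Equivalently, one may construct a haven of order $t+1$ by assigning to each $X$ with $|X|\le t$ a designated connected component of $\boxplus_t - X$ and checking monotonicity.

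The main obstacle is the lower bound. The upper bound is a routine frontier construction, but pushing the lower bound from the easy $\tw(\boxplus_t)\ge t-1$ (obtainable by a crude balanced-separator argument via Proposition~\ref{lemma:balsep22}) to the tight $\tw(\boxplus_t)\ge t$ requires a carefully chosen bramble (or haven) and a delicate pigeonhole to rule out transversals of size exactly $t$. That verification is where the real work lies; once it is done, Seymour--Thomas gives the matching bound immediately.
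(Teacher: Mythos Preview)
The paper does not supply a proof of this proposition at all: it is stated as ``the following result of Robertson and Seymour~\cite{RobertsonS3}'' and left at that. So there is no in-paper argument to compare against; the authors are simply quoting a known fact from the literature.

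Your outline is the standard way the result is established and is sound as a plan. The upper bound via a column-major sliding frontier is routine and gives a path decomposition of width~$t$. For the lower bound you correctly identify the real issue: the naive cross bramble $\{R_i\cup C_j\}$ has order only $t$ (a set consisting of one vertex per row hits every cross), yielding merely $\tw(\boxplus_t)\ge t-1$. Getting order $t+1$ requires the refined construction you allude to --- e.g.\ crosses inside a $(t-1)\times(t-1)$ subgrid together with the remaining row and (truncated) column as two further bramble elements --- and then a careful pigeonhole to show any hitting set has size at least $t+1$. That verification is indeed where all the work sits, and once done, Seymour--Thomas duality finishes. Since the paper delegates the whole statement to~\cite{RobertsonS3}, your plan is strictly more detailed than what the paper offers.
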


\paragraph{Parameterized graph problems.}
In general, 
a parameterized graph problem  $\Pi$  can be seen as a subset of $\Sigma^{*}\times \Bbb{N}$
where, in each instance $(x,k)$ of $\Pi,$ $x$ encodes a graph and $k$ is the parameter. In this paper we use an extension of this definition used in~\cite{BodlaenderFLPST16meta}  that permits the parameter $k$ to be negative 
with the additional constraint that either all pairs with non-positive value of the parameter 
are in $\Pi$ or that no such pair is in $\Pi$. Formally, a parameterized problem $\Pi$
is a subset of $\Sigma^{*}\times \Bbb{Z}$ where for all $(x_{1},k_{1}),(x_{2},k_{2})\in\Sigma^{*}\times \Bbb{Z}$
with $k_{1},k_{2}<0$ it holds that $(x_{1},k_{1})\in\Pi$ if and only if  $(x_{2},k_{2})\in\Pi$.
This extended definition encompasses the traditional one and is being adopted for technical reasons  
(see Subsection~\ref{subsec:finiinteginde}).
In an instance of a parameterized problem $(x,k),$ the integer $k$ is called the parameter.

%
The notion of {\em kernelization} is due to Downey and Fellows  \cite{DowneyF99}. Kernelization is formally defined as follows.

\begin{definition}[{\bf Kernelization}]
A {\em{kernelization algorithm}}, or simply a {\em{kernel}}, for a parameterized problem $\Pi$ is an algorithm $\mathcal{A}$ that, given an instance $(x,k)$ of $\Pi$, works in polynomial-time and returns an equivalent instance $(x',k')$ of $\Pi$. Moreover, 
there exists a computable function $g(\cdot)$ such that whenever $(x',k')$ is the output for an instance $(x,k)$, then it holds that $|x'|+k'\leq g(k)$. If the upper bound $g(\cdot)$ is a polynomial (linear) function of the parameter, then we say that $\Pi$ admits a {\em{polynomial (linear) kernel}}.
\end{definition}

We often abuse the notation and call the output of a kernelization algorithm, the ``reduced'' equivalent instance, also a kernel.


\paragraph{Bidimensionality}\label{chapter:logicpreliminar}

 Bidimensionality theory was  introduced by Demaine et al. in~\cite{DemaineFHT05jacm}. Here we closely follow the presentation of bidimensionality from the book of Cygan et al.  \cite{Cygan15_book}.

We will restrict our attention to vertex or edge subset problems.
 A {\em vertex subset problem} $\Pi$ is a parameterized problem where input is a graph $G$ and an integer $k$, the parameter is $k$. An instance $(G,k)$ is a \yesinstance if and only if there exists a set $S \subseteq V(G)$ such that $|S| \leq k$ for minimization problems (or $|S| \geq k$ for maximization problems) so that a predicate $\phi(G, S)$ is true. Here $\phi$ can be any computable function 
 which takes as input a graph $G$ and set $S \subseteq V(G)$ and outputs {\bf true} or {\bf false}. In the rest 
 of this paper, we always assume that a problem $\Pi$ is generated by the choice 
 of such a predicate $\phi$ and the choice of whether it is a minimization or a maximization problem. We will refer to $\phi$ as the \emph{feasibility function} of $\Pi$.

The interpretation is that $\phi$ defines the space of {\em feasible solutions} $S$ for a graph $G$ by returning whether $S$ is feasible for $G$. For an example, for  the {\sc Dominating Set} problem we have that $\phi(G,S) = {\bf true}$ if and only if $N[S] = V(G)$. {\em Edge subset problems} are defined similarly, with the only difference being that $S$ is a subset of $E(G)$ rather than $V(G)$.

Let us remark that there are many vertex/edge subset problems  which, at a first glance, do not look as if they could be captured by this definition. An example is the {\sc Cycle Packing} problem. Here the input is a graph $G$ and integer $k$, and the task is to determine whether $G$ contains $k$ pairwise vertex-disjoint cycles $C_1, C_2, \ldots , C_k$. This is a vertex subset problem because $G$ has $k$ vertex-disjoint cycles if and only if there exists a set $S \subseteq V(G)$  of size at least $k$ and $\phi(G,S)$ is
true, where $\phi(G,S)$ is defined as follows.
\begin{align*}
\phi(G,S) =\ & \exists \mbox{ subgraph } G'   \mbox{ of } G  \mbox{ such that }  \\
& \bullet \mbox{each connected component of } G'  \mbox{ is a cycle,} \\
& \bullet \mbox{and each connected component of } G' \mbox{ contains exactly one vertex of $S$.} 
\end{align*}

This definition may seem a bit silly, since checking whether $\phi(G,S)$ is true for a given graph $G$ and set $S$ is {\sf NP}-complete. In fact this problem is considered as a more difficult problem than {\sc Cycle Packing}. Nevertheless, this definition shows that {\sc Cycle Packing} is a vertex subset problem, which will allow us to give a linear kernel for {\sc Cycle Packing} on minor free graphs (see also~\cite{Garnero2018} for explicit bounds on linear kernels for packing problems).

For any vertex or edge subset minimization problem $\Pi$ we have that $(G,k) \in \Pi$ implies that  $(G,k') \in \Pi$ for all $k' \geq k$. Similarly, for a vertex or edge subset maximization problem we have that $(G,k) \in \Pi$ implies that  $(G,k') \in \Pi$ for all $k' \leq k$. Thus the notion of ``optimality'' is well defined for vertex and edge subset problems.
\begin{definition}[{\bf Optimum solution}]
For a vertex/edge subset minimization problem $\Pi$,  we define
$$OPT_\Pi(G) = \min\left\{k \mid (G,k) \in \Pi\right\}.$$
If no $k$ such that $(G,k) \in \Pi$ exists, $OPT_\Pi(G)$ returns $+\infty$. 
For a vertex/edge subset maximization problem $\Pi$, 
$$OPT_\Pi(G) = \max\left\{k \mid (G,k) \in \Pi\right\}.$$
If no $k$ such that $(G,k) \in \Pi$ exists, $OPT_\Pi(G)$ returns $-\infty$.
We say that a vertex (edge) set $S$ is  an \emph{optimum solution} of $\Pi$ for $G$ if $\phi(G,S) = \mbox{\bf true}$
and $|S|=OPT_\Pi(G)$.
We define $SOL_\Pi(G)$ to be a function that given as an input a graph $G$ returns an optimum solution  $S$  and returns {\bf null} if no such set $S$ exists (if many optimum solutions exist, we pick one arbitrarily).
\end{definition}

For many problems it holds that contracting an edge can not increase the size of the optimal solution. We will say that such problems are  {contraction closed}. Formally we have the following definition.
\begin{definition}[{\bf Contraction-closed problem}]
A vertex/edge subset problem $\Pi$ is {\em contraction-closed} if for any $G$ and $uv \in E(G)$, $OPT_\Pi(G/uv) \leq OPT_\Pi(G)$.
\end{definition}
If contracting edges, deleting edges and deleting vertices can not increase the size of the optimal solution, we say that the problem is minor-closed.
\begin{definition}[{\bf Minor-closed problem}]
A vertex/edge subset problem $\Pi$ is {\em minor-closed} if for any $G$, edge $uv \in E(G)$ and vertex $w \in V(G)$, $OPT_\Pi(G/uv) \leq OPT_\Pi(G)$, $OPT_\Pi(G \setminus uv) \leq OPT_\Pi(G)$, and $OPT_\Pi(G- w) \leq OPT_\Pi(G)$.

\end{definition}
The following  slight modification
of $OPT_{\Pi}$  makes possible to avoid dealing with 
asymptotic inequalities in the rest of the paper.  
\begin{definition}[{\bf Modified OPT}]
We also define $OPT_{\Pi}^{*}(G)=\max\{OPT_\Pi(G),1\}$.
\end{definition}

We are now ready to give the definition of bidimensional problems.

\begin{definition}[{\bf Bidimensional problem}]
\label{defdbidi}
A vertex/edge subset problem $\Pi$ is 
\begin{itemize}
\item {\em contraction-bidimensional}: if it is contraction-closed and there exists a positive real constant $\constbidimens$ such that  for every $ k\in\Bbb{N}$, $OPT_\Pi^{*}(\Gamma_k) \geq \constbidimens \cdot k^2$.
\item {\em minor-bidimensional}: if it is minor-closed and 
there exists a positive real constant $\constbidimens$ such that  for every $ k\in\Bbb{N}$, $OPT_\Pi^{*}(\boxplus_k) \geq \constbidimens \cdot k^2$.
%
%
\end{itemize}
\end{definition}


%

It is usually quite easy to determine whether a problem is contraction  (or minor)-bidimensional. Take for an example {\sc Independent Set}. Contracting an edge may never increase the size of the maximum independent set, so the problem is contraction closed. Furthermore, $\Gamma_k$
is a planar graph and because of the Four Color theorem, it contains an independent
set of size at least $\lfloor \frac{|V(\Gamma_{k})}{4}\rfloor$. %
Thus  {\sc Independent Set} is contraction-bidimensional. On the other hand, deleting edges may increase the size of a maximum size independent set in $G$. Thus {\sc Independent Set} is not minor-bidimensional. 


We would also like to comment why we define  bidimensionality by making use of $OPT_\Pi^{*}$ and not $OPT_\Pi$. The reason is that for most interesting problems like {\sc Vertex Cover} or {\sc Feedback Vertex Set}, the values of   $OPT_\Pi (\boxplus_k)$ on very small grids (like $\boxplus_1$ for {\sc Vertex Cover} and  $\boxplus_2$ for {\sc Feedback Vertex Set}) is zero. The definition of $OPT_\Pi^{*}$ takes care of such degenerate situations. 

An alternative way to define bidimensionality, by making use of  $OPT_\Pi$, would be to ask for the existence of a positive $\constbidimens$ such that that  $OPT_\Pi(\boxplus_k)$ (or $OPT_\Pi(\Gamma_k)$) is {\sl asymptotically bigger} than $\constbidimens \cdot k^2$, i.e.,  there exists   an integer $k_0>0$ such that  for every $ k\geq k_0$, $OPT_\Pi(\boxplus_k) \geq \constbidimens \cdot k^2$. This new definition is equivalent to the one of Definition~\ref{defdbidi}.
To see this, in the non-trivial direction, we set $k_1=\max\{k_{0},\sqrt{1/\constbidimens}\}$ 
which  implies that $\forall k\geq k_1\ OPT_{\Pi}(\boxplus_{k})=OPT_{\Pi}^*(\boxplus_{k})\geq \constbidimens \cdot k^2$. 
We now set $\constbidimens'=\min\{\constbidimens,1/\sqrt{k_{1}}\}$ and observe that 
if $0\leq k\leq k_1$, then $OPT^*_{\Pi}(\boxplus_k)\geq 1\geq \constbidimens'\cdot k^{2}$, while if $k\geq k_1$, then 
$OPT_{\Pi}^*(\boxplus_{k})=OPT_{\Pi}(\boxplus_{k})\geq \constbidimens'\cdot k^2$, as required. 

In this paper we adopted the definition of Bidimensionality that uses $OPT_\Pi^{*}$ because this makes our proofs easier to present.

\paragraph{Counting Monadic Second Order Logic}
\label{subsec:counmonasecoordelogi}

The syntax of Monadic Second Order Logic (MSO) of graphs includes the logical connectives $\vee,$ $\land,$ $\neg,$ 
$\Leftrightarrow ,$  $\Rightarrow,$ variables for 
vertices, edges, sets of vertices, and sets of edges, the quantifiers $\forall,$ $\exists$ that can be applied 
to these variables, and the following five binary relations: 
\begin{enumerate}

\item 
$u\in U$ where $u$ is a vertex variable 
and $U$ is a vertex set variable; 
\item 
 $d \in D$ where $d$ is an edge variable and $D$ is an edge 
set variable;
\item 
 $\mathbf{inc}(d,u),$ where $d$ is an edge variable,  $u$ is a vertex variable, and the interpretation 
is that the edge $d$ is incident with the vertex $u$; 
\item 
 $\mathbf{adj}(u,v),$ where  $u$ and $v$ are 
vertex variables  and the interpretation is that $u$ and $v$ are adjacent; \item 
 equality of variables representing vertices, edges, sets of vertices, and sets of edges.
\end{enumerate}

In addition to the usual features of monadic second-order logic, if we have atomic sentences testing whether the cardinality of a set is equal to $q$ modulo $r,$ where $q$ and $r$ are integers such that $0\leq q<r$ and $r\geq 2$, then 
this extension of the MSO is called {\em counting monadic second-order logic}. Thus CMSO is MSO enriched with the following atomic sentence for a set $S$: 
\begin{quote}
$\mathbf{card}_{q,r}(S) = \mathbf{true}$ if and only if $|S| \equiv q \pmod r.$ 
\end{quote}
For a detailed introduction on CMSO, see~\cite{ArnborgLS91,Courcelle90,Courcelle97}. 

\medskip

We consider CMSO sentences evaluated either on graphs or on annotated graphs. 
In this paper by annotated graph we mean a pair $(G,S)$, where $G$ is a graph and $S$ is either a vertex or edge subset of $G$. We remark that a graph can be annotated with more than one set and all of the considerations in this paper could be generalized in this sense. For simplicity, we omit this in this paper. 
A class of graphs ${\cal G}$ is {\em CMSO-definable} if there is a CMSO sentence on graphs  $\phi$ such that $G \in {\cal G}$ if and only if $G\models \phi$. Similarly a predicate $\phi$ on annotated graphs is {\em CMSO-definable} if there is a CMSO sentence $\consteps$ on annotated graphs   such that $\phi(G,S) = {\bf true}$ if and only if $(G,S) \models \consteps$.

A vertex/edge subset minimization (or maximization) problem with feasibility function $\phi$ is a {\sc min-CMSO} problem  (or {\sc max-CMSO} problem) if $\phi$ it is CMSO-definable.


\paragraph{Subquadratic grid minor/contraction property.} In general, 
it is known that there exists a constant $c$ such that any graph $G$ which excludes a $\boxplus_{k}$ as a minor has treewidth at most $O(k^c)$.  The exact value of $c$ remains unknown, but it is more than $2$ (see e.g.,~\cite{Thilikos12grap})
and at most $20$ 
(see~\cite{ChekuriC13,Chuzhoy15excl,Chuzhoy16}). We will restrict our attention to graph classes on which $c < 2$. In particular, we say that a graph class ${\cal G}$ has the {\em subquadratic grid minor property} (SQGM property, for short) if there exist constants $\lambda > 0$ and $1 \leq c < 2$ such that every graph $G \in {\cal G}$ which excludes    $\boxplus_{k}$ as a minor has treewidth at most $\lambda k^c$.

Problems that are contraction-closed but not minor closed are considered on more restricted classes of graphs. We say that a graph class ${\cal G}$ has the {\em subquadratic graph contraction property} (SQGC property for short) if there exist constants $\lambda > 0$ and $1 \leq c < 2$ such that any {\sl connected} graph $G \in {\cal G}$ which excludes a  $\Gamma_{k}$ as a contraction has treewidth at most $\lambda k^c$. 

%
\begin{observation}\label{ovsSQGCSQGM} Every graph class ${\cal G}$ with the SQGC property has the SQGM property. \end{observation}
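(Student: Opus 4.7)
My plan is to chain together three elementary facts. First, I would observe that $\Gamma_t$ contains $\boxplus_t$ as a subgraph, hence also as a minor, and so by transitivity of the minor relation, any graph that excludes $\boxplus_t$ as a minor automatically excludes $\Gamma_t$ as a minor. Since every contraction of a graph is, in particular, a minor of it, excluding $\Gamma_t$ as a minor is strictly stronger than excluding $\Gamma_t$ as a contraction. Thus whenever $G$ excludes $\boxplus_t$ as a minor, $G$ excludes $\Gamma_t$ as a contraction.

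Second, to reconcile the fact that the SQGC property is only stated for \emph{connected} graphs with the SQGM property, which concerns arbitrary graphs in $\mathcal{G}$, I would invoke Proposition~\ref{tcompo}: the treewidth of $G$ equals the maximum treewidth over its connected components. Thus it suffices to bound $\tw(C)$ for each connected component $C$ of $G$ separately and then take the maximum.

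Finally, for each connected component $C$ of a graph $G \in \mathcal{G}$ that excludes $\boxplus_t$ as a minor, $C$ (being a subgraph of $G$) also excludes $\boxplus_t$ as a minor and therefore, by the first step, excludes $\Gamma_t$ as a contraction. Applying the SQGC property to $C$ yields $\tw(C) \leq \lambda t^c$, and hence $\tw(G) \leq \lambda t^c$, establishing SQGM with the same constants $\lambda$ and $c$ that witness SQGC.

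The one point where I would need to be careful is verifying that each connected component $C$ still belongs to $\mathcal{G}$, so that SQGC may be applied to it. This is a mild closure assumption which is routinely satisfied by the graph classes considered in this paper (e.g., $H$-minor-free classes are trivially closed under taking subgraphs, in particular under taking connected components). I do not expect any substantive obstacle here; the entire argument is a one-line deduction once the subgraph relationship between $\boxplus_t$ and $\Gamma_t$ is noted.
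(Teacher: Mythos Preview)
Your proposal is correct and follows exactly the approach the paper indicates: the observation is justified in the text by the single remark that $\Gamma_t$ contains $\boxplus_t$ as a subgraph (hence as a minor), from which the implication follows. In fact, you are more careful than the paper in explicitly handling the passage from connected graphs (as in the SQGC definition) to arbitrary graphs via Proposition~\ref{tcompo} and the component-closure of~$\mathcal{G}$; the paper leaves this implicit.
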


\begin{proof}
Suppose that the graph class ${\cal G}$ has the SQGC property. This means that there is 
a $c\in[1,2)$ such that  if $G$ is a connected graph in ${\cal G}$ that cannot be contracted to  
$\Gamma_{k}$, then $\tw(G)\leq \lambda k^c$.
Suppose that $G$ is any graph in ${\cal G}$ excluding  $\boxplus_{k}$ as a minor.
Clearly all connected components of $G$ exclude  $\boxplus_{k}$ as a minor
and, as  $\boxplus_{k}$ is a minor of $\Gamma_{k}$ (in fact it is a subgraph),  the connected components of $G$ also exclude  $\Gamma_{k}$ as a contraction. This  implies that all connected components 
of $G$ have treewidth at most $\lambda k^c$, therefore $G$ has treewidth at most $\lambda k^c$, as required.
\end{proof}

The following proposition  follows directly from the linearity of excluded grid-minor in $H$-minor-free graphs proven by  Demaine and Hajiaghayi~\cite{DemaineH08}  and its analog for contraction-minors from \cite{FominGT11cont}.
\begin{proposition}\label{prop:linear_grid_minor}
For every graph $H$, {$H$-minor-free} graph class ${\cal G}$ has the SQGM property with $c=1$.  If $H$ is an apex graph, then ${\cal G}$ has the SQGC property  with $c=1$.
\end{proposition}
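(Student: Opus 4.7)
The plan is to deduce both claims as immediate consequences of two prior results, simply by taking contrapositives and bookkeeping constants. No new combinatorial argument is needed here; the point of Proposition~\ref{prop:linear_grid_minor} is to observe that the exponents $c$ furnished by the cited excluded-grid theorems happen to equal $1$ on the graph classes considered, which in turn is the regime demanded by the SQGM and SQGC definitions ($1\leq c<2$).

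For the first assertion, I would invoke the theorem of Demaine and Hajiaghayi~\cite{DemaineH08}, which asserts that for every graph $H$ there exists a constant $c_H$ such that every $H$-minor-free graph $G$ with $\tw(G)\geq c_H\cdot t$ contains $\boxplus_t$ as a minor. Reading the contrapositive: if $G$ is $H$-minor-free and excludes $\boxplus_t$ as a minor, then $\tw(G) < c_H\cdot t$. This is precisely the definition of the SQGM property with $\lambda := c_H$ and $c := 1$; the constraint $1\leq c<2$ is trivially met since $c=1$.

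For the second assertion, I would invoke the contraction analog proved by Fomin, Golovach and Thilikos~\cite{FominGT11}, which states that when $H$ is an apex graph, there is a constant $c'_H$ such that every \emph{connected} $H$-minor-free graph $G$ with $\tw(G) \geq c'_H \cdot t$ contains $\Gamma_t$ as a contraction. Again the contrapositive yields the SQGC property with $\lambda := c'_H$ and $c:=1$. The restriction to connected graphs in \cite{FominGT11} matches exactly the connectivity hypothesis in the definition of the SQGC property, so no extra reduction step is needed; if one ever wanted to pass to disconnected graphs, Proposition~\ref{tcompo} would handle the bookkeeping, but the statement under proof does not require this.

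The only place that could conceivably cause trouble is verifying that the two cited theorems really are stated in the \emph{linear} form ``$\tw(G) \leq c_H\cdot t$'' rather than in the weaker polynomial form $\tw(G)\leq \lambda t^c$ with $c>1$. This is, however, exactly the content of~\cite{DemaineH08} for minors and of~\cite{FominGT11} for contractions, so the main ``obstacle'' is bibliographical rather than mathematical: one need only quote these two theorems with the correct constants and read them contrapositively.
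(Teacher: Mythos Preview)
Your proposal is correct and matches the paper's approach exactly: the paper does not give a proof at all but simply states that the proposition ``follows directly from the linearity of excluded grid-minor in $H$-minor-free graphs proven by Demaine and Hajiaghayi~\cite{DemaineH08} and its analog for contraction-minors from~\cite{FominGT11}.'' Your write-up merely spells out the contrapositive reading of those two theorems and checks the constants, which is precisely what is intended.
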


\paragraph{Problem restrictions.}
We say that a parameterized problem $\Pi$ is a problem on the graph class ${\cal G}$ if every yes-instance $(G,k)$ of $\Pi$ satisfies $G \in {\cal G}$. The {\em restriction} of a parameterized problem $\Pi$ to a graph class ${\cal G}$, denoted by $\Pi \doublecap {\cal G}$ is defined as follows.
$$ {\rm \Pi} \doublecap {\cal G} = \{(G,k)\mid (G,k)\in \Pi \mbox{~and~}G \in {\cal G}\}. $$
For a parameterized problem $\Pi$ (on general graphs) we will refer to the restriction of $\Pi$ to ${\cal G}$ by ``$\Pi$ on ${\cal G}$''.


\section{Decomposing into protrusions}\label{sec:decprotrusions}
In this section we give the main technical combinatorial contribution of this work establishing a protrusion decomposition theorem for linearly-separable bidimensional problems. 
The proof of the theorem is done in two steps. First we show that every
 graph class with the SQGM or the SQGC property admit a treewidth-modulator of size linear in the parameter of a 
linearly-separable (minor- or contraction-) bidimensional problem 
(Subsection~\ref{subsec:seps_and_tw_moduls}). Then in Subsection~\ref{subsec:protr_decomps}, we show that graph classes with SQGM and SQGC properties and having linear treewidth-modulators, 
can be decomposed into protrusions.

\subsection{Parameter-treewidth bounds}
The following lemmata establish {\em parameter-treewidth bounds}, that is tight relationships between the size of the optimal solution and the treewidth of the input graph. This relationship was first observed by Demaine et al. \cite{DemaineFHT05jacm}. The bound for 
contraction-bidimensional problems presented here is essentially identical to the one presented in Fomin et al.~\cite{FominGT11cont}. We re-prove the lemmata here because of slight differences in definitions.
 
\begin{lemma}\label{lem:bidimParameterTreewidth}
For any minor-bidimensional problem $\Pi$ on a graph class ${\cal G}$ with the SQGM property, there exist constants $0 < \alpha$, $\frac{1}{2}\leq  \constPower < 1$ such that for any graph $G \in {\cal G}$, $\tw(G) \leq \alpha \cdot  (OPT^{*}_\Pi(G))^{\constPower}$. 
\end{lemma}

\begin{proof}
Let $\lambda>0$ and $1\leq c<2$ be the constants from the definition of the SQGM property, that is any graph $G \in {\cal G}$ which excludes a  $\boxplus_{k}$ as a minor has treewidth at most $\lambda k^c$.
Because of the minor-bidimensionality of $\Pi$, 
 there is   $\beta>0$ such that  $OPT^{*}_\Pi(\boxplus_k) \geq \beta k^2$ for every $k\in\Bbb{N}$. Consider now a graph $G \in {\cal G}$. 

Let $k$ be the maximum integer such that $G$ contains $\boxplus_k$ as a minor. This means that $G$ excludes $\boxplus_{k+1}$ as a minor, therefore $\tw(G) \leq  \lambda (k+1)^c$. 
Rearranging terms yields that $k\geq (\frac{\tw(G)}{\lambda})^{1/c}-1$. Since $\Pi$ is minor-closed, it follows that $OPT_\Pi(G) \geq OPT_\Pi(\boxplus_k)$, therefore 
$$OPT^*_\Pi(G) \geq OPT^*_\Pi(\boxplus_k) \geq \beta k^2 \geq \beta \big((\frac{\tw(G)}{\lambda})^{1/c}-1\big)^2,$$
hence
$$\tw(G)\leq \lambda \big(\big(\frac{OPT^*_\Pi(G)}{\beta}\big)^\frac{1}{2}+1\big)^c.$$
Recall that $OPT^*_\Pi(G) \geq 1$. We set $x=(\frac{OPT^*_\Pi(G)}{\beta})^\frac{1}{2}$ and, as $x\geq \frac{1}{\sqrt{\beta}}$, we obtain that $x+1\leq (1+\sqrt{\beta})x$.
This implies that $$\tw(G)\leq \lambda ((1+\sqrt{\beta})(\frac{OPT^*_\Pi(G)}{\beta})^\frac{1}{2})^c=\lambda\cdot (1+\frac{1}{\sqrt{\beta}})^{c}\cdot (OPT^*_{\Pi}(G))^{\frac{c}{2}}.$$
We can now set $\alpha=\lambda\cdot (1+\frac{1}{\sqrt{\beta}})^{c}$, $\constPower=c/2$ and observe that $\tw(G)\leq \alpha\cdot  (OPT^*_\Pi(G))^{\constPower}$.

Since $c < 2$ in the definition of the SQGM property, we obtain  that $\constPower<1$ and the statement of the lemma follows.
\end{proof}
\begin{lemma}\label{lem:contractionBidimParameterTreewidth}
For any contraction-bidimensional problem $\Pi$ on a graph class ${\cal G}$ with the SQGC property, there exist constants $0 < \alpha$, $\frac{1}{2}\leq  \constPower < 1$, such that for any connected graph $G \in {\cal G}$, $\tw(G) \leq \alpha \cdot (OPT^*_\Pi(G))^{\constPower}$. 
\end{lemma}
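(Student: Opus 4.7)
The plan is to mirror the proof of Lemma~\ref{lem:bidimParameterTreewidth} essentially verbatim, swapping the grid $\boxplus_t$ for the triangulated grid $\Gamma_t$, the SQGM property for the SQGC property, and minor-closedness for contraction-closedness. Let $\lambda > 0$ and $1 \leq c < 2$ be the constants witnessing that $\mathcal{G}$ has the SQGC property, i.e.\ every connected $G \in \mathcal{G}$ that excludes $\Gamma_t$ as a contraction has $\tw(G) \leq \lambda t^c$; let $\beta > 0$ be the constant from the definition of contraction-bidimensionality, so that $OPT_\Pi(\Gamma_t) \geq \beta t^2$.

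Fix a connected graph $G \in \mathcal{G}$ and let $t$ be the largest integer for which $\Gamma_t \leq_c G$. Then $G$ does not contain $\Gamma_{t+1}$ as a contraction, so by SQGC we obtain
\[
\tw(G) \;\leq\; \lambda (t+1)^c,
\qquad\text{hence}\qquad
t+1 \;\geq\; \bigl(\tw(G)/\lambda\bigr)^{1/c}.
\]
Because $\Pi$ is contraction-closed, a straightforward induction on the length of a contraction sequence from $G$ to $\Gamma_t$ gives $OPT_\Pi(G) \geq OPT_\Pi(\Gamma_t)$, and then contraction-bidimensionality yields $OPT_\Pi(G) \geq \beta t^2$. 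Combining these inequalities (treating the degenerate case $t = 0$ separately, where $\tw(G)$ is bounded by the absolute constant $\lambda$ and the claimed bound holds trivially for any sufficiently large $\alpha$), one gets
\[
\tw(G) \;\leq\; \frac{\lambda}{\beta^{c/2}} \cdot OPT_\Pi(G)^{c/2}.
\]
Setting $\alpha = \lambda / \beta^{c/2}$ and $\epsilon = c/2$, and observing that $\epsilon < 1$ since $c < 2$, completes the argument.

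The only two subtleties worth flagging are (i) the SQGC property is stated only for connected graphs, which is precisely why the lemma is restricted to connected $G$ (unlike the minor version, where Proposition~\ref{tcompo} lets one pass to components transparently), and (ii) one needs the iterated version ``$H \leq_c G \Rightarrow OPT_\Pi(H) \leq OPT_\Pi(G)$,'' which follows immediately from the single-edge statement in the definition of contraction-closed by induction on the number of edge contractions. Neither of these is a real obstacle; the lemma is genuinely just a contraction-flavoured restatement of Lemma~\ref{lem:bidimParameterTreewidth}.
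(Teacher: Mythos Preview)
Your proposal is correct and is exactly what the paper does: it states that the proof of this lemma is identical to that of Lemma~\ref{lem:bidimParameterTreewidth}, with the obvious substitutions of $\Gamma_t$ for $\boxplus_t$, SQGC for SQGM, and contraction-closedness for minor-closedness. Your two flagged subtleties (the restriction to connected $G$ and the iterated contraction inequality) are precisely the points the paper highlights when contrasting the two lemmata.
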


The proof of the Lemma~\ref{lem:contractionBidimParameterTreewidth} is almost identical to the proof of Lemma~\ref{lem:bidimParameterTreewidth}. The differences between Lemmata~\ref{lem:bidimParameterTreewidth} and~\ref{lem:contractionBidimParameterTreewidth} are as follows. Lemma~\ref{lem:bidimParameterTreewidth} is for minor-closed problems, on graph classes ${\cal G}$ with the SQGM property and works for {\em every} $G \in {\cal G}$. 
Lemma~\ref{lem:contractionBidimParameterTreewidth} is for contraction-closed problems, on graph classes ${\cal G}$ with the SQGC property and works for {\em connected} graphs $G \in {\cal G}$. The connectivity requirement here is necessary: For SQGC property we require that  there exist constants $\lambda > 0$ and $1 \leq c < 2$ such that any {\em connected} graph $G \in {\cal G}$ which excludes a  $\Gamma_{k}$ as a contraction has treewidth at most $\lambda k^c$.  Moreover, it is possible to provide   an example of a contraction-bidimensional problem $\Pi$ such that Lemma~\ref{lem:contractionBidimParameterTreewidth} does not hold for $\Pi$ and disconnected graphs,  see \cite[Exercise~7.42]{Cygan15_book}. However, as we will see soon, if in addition to contraction-bidimensionality the problem is separable, then the connectivity condition is not necessary anymore. 



\subsection{Separability and treewidth modulators}\label{subsec:seps_and_tw_moduls}
We now restrict our attention to problems $\Pi$ that are somewhat well-behaved in the sense that whenever we have a small separator in the graph that splits the graph in two parts $L$ and $R$, the intersection $|X \cap L|$ of $L$ with any optimal solution $X$ to the entire graph is a good estimate of $OPT_\Pi(G[L])$. This restriction allows us to prove decomposition theorems which are very useful for giving kernels. Similar decomposition theorems may also be used to give approximation schemes, see~\cite{DemaineHaj05,FominLRS10}. 
\begin{definition}[{\bf Separability}]
Let $f : \mathbb{N} \rightarrow  \mathbb{N}$ be a function. We say that a vertex subset problem $\Pi$ is $f$-{\em separable} if for every graph $G$,  every optimum solution $S$ of $\Pi$ for $G$,   and every subset $L \subseteq V(G)$ 
$$|S \cap L| - f(t) \leq OPT_\Pi(G[L]) \leq |S \cap L| + f(t)$$
where  $t=|\partial_G(L)|$. 
In the case where $\Pi$ is an edge  subset problem the same definition as above applies with the difference that we agree to interpret $S\cap L$ by $S\cap E(G[L])$.
The problem $\Pi$ is called {\em separable} if there exists a function $f$ such that $\Pi$ is $f$-{\em separable}. $\Pi$ is called {\em linear-separable} if function $f$ is linear. That is, there exists a constant $\sigma$ such that $\Pi$ is $\sigma \cdot t$-separable.
\end{definition}

 There are many problems that are  contraction (or minor)-bidimensional, linear-separable including  \probrDS, \probCDS, \probCVC, \probVC, \probIS, \probFVS,  and \probrSC, \probCycPacking as well as many packing and covering problems. Another important generic problem which is minor-bidimensional and linear-separable, is the 
 \textsc{Treewidth-$\eta$-Modulator} problem defined below. 
 We refer to  \cite{DemaineHaj05} and \cite[Section~4]{FominLS18excl} for definitions of these problems and the proofs that they are contraction-bidimensional and  linear-separable.

 Just as one example, let us consider \probCycPacking. It is clearly minor-bidimensional. 
It is easy to see that \probCycPacking is a minor-closed problem. Since grid $\boxplus_{k}$ contains $\Omega(k^2)$ vertex-disjoint cycles,  
\probCycPacking is a minor-bidimensional problem. For linear-separability, 
we can view  \probCycPacking as a problem of finding a maximum vertex set $X$ such that there is a subgraph $H$ of $G$, such that every connected component of $H$ contains exactly one vertex of $X$ and is a cycle. Observe that 
  by deleting $t$ vertices from graph $G$, we cannot hit more than 
 $t$ vertex-disjoint cycles from the solution. Hence the linear-separability of   \probCycPacking follows.

A nice feature of separable contraction-bidimensional problems is that it is possible to extend Lemma~\ref{lem:contractionBidimParameterTreewidth} to disconnected graphs. 
%
\begin{lemma}\label{lem:bidimParameterTreewidthSep}
For any contraction-bidimensional separable problem $\Pi$ on a graph class ${\cal G}$ with the SQGC property, there exist constants $0 < \alpha$ and  $\frac{1}{2}\leq  \constPower < 1$ such that for any graph $G \in {\cal G}$, $\tw(G) \leq \alpha \cdot (OPT^*_\Pi(G))^{\constPower}$. 
\end{lemma}

\begin{proof}
Let $S$ be any optimal solution for $\Pi$.
Since $\Pi$ is separable there exists a function $f : \mathbb{N} \rightarrow  \mathbb{N}$
 such that for every graph $G$ and every connected component $C$ of $G$, it holds that 
$$OPT_\Pi(G[C]) \leq |S \cap  C | + f(|\partial_{G}( C )|) \leq |S| + f(0) = OPT_\Pi(G) + f(0).$$
We set $c=f(0)$. The above implies that $OPT^*_\Pi(G[C]) \leq OPT^*_\Pi(G) + c.$
By Lemma~\ref{lem:contractionBidimParameterTreewidth}, there exist constants  $\alpha'>0$ and $\frac{1}{2}\leq  \constPower < 1$ such that $\tw(G[C]) \leq \alpha' \cdot \big(OPT^*_\Pi(C)\big)^\constPower$ for every connected component $C$. By Proposition~\ref{tcompo}, the treewidth of $G$ is at most the maximum of the treewidth of its connected components. Since 
  $OPT^{*}_{\Pi}(G)\geq 1$,  we have that 
 $$\tw(G)\leq \alpha' \cdot \big(OPT^*_\Pi(G) + c\big)^\constPower\leq \alpha' \cdot \big((c+1)\cdot OPT^*_\Pi(G)\big)^\constPower=  \alpha' \cdot (c+1)^\constPower\cdot \big(OPT^*_\Pi(G)\big)^\constPower.$$
The lemma follows if we set $\alpha= \alpha' \cdot (c+1)^\constPower$.
\end{proof}

We will now consider a sequence of ``canonical'' bidimensional problems, with one problem for every $\eta \geq 0$. We say that a set $S \subseteq V(G)$ is a {\em \twetamod } if $\tw(G- S) \leq \eta$.  We define the following problem. 

\begin{center}
\begin{boxedminipage}{.96\textwidth}
\textsc{Treewidth-$\eta$-Modulator}\\
\begin{tabular}{ r l }
\textit{~~Instance:} & A graph $G$, and integer $k \geq 0$.\\
\textit{Parameter:} & $k$.\\
\textit{Problem:} & Decide whether there exists a  \twetamod   of $G$ of size at most $k$.
\end{tabular}
\end{boxedminipage}
\end{center}

Clearly, $\Pi=$\textsc{Treewidth-$\eta$-Modulator} is a minimization problem
and it is easy to see that it is also  minor-closed. By Proposition~\ref{prop:tw_grid},  every $(\eta+1) \times (\eta+1)$ subgrid of $\boxplus_k$, for $k\geq 1$, must contain at least one vertex of any solution, therefore 
\[OPT_\Pi^{*}(\boxplus_k) \geq  \lfloor\frac{k^2}{(\eta+1)^2}\rfloor.\]

Thus the problem is minor-bidimensional. 
\medskip


To see that the problem is separable,  consider any graph $G$, and vertex subset $L \subseteq V(G)$ with $\partial(L)  = t$. For any \twetamod   $S$ of $G$ we have that $(S \cap L) \cup \partial(L)$ is a  \twetamod   of $G[L]$. This proves $OPT_\Pi(G[L]) \leq |S \cap L| + t$. On the other hand, for any optimal \twetamod  $S$ of $G$ and \twetamod  $S_L$ of $G[L]$, we have that $|S \cap L| \leq |S_L| + t$, since otherwise $(S \setminus L) \cup S_L \cup \partial(L)$ is a \twetamod  of $G$ of size strictly smaller than $|S|$, contradicting the optimality of $S$. It follows that $|S\cap L| - t \leq OPT_\Pi(G[L])$. This shows  that the \textsc{Treewidth-$\eta$-Modulator} problem is minor-bidimensional and separable. 

The fact that the \textsc{Treewidth-$\eta$-Modulator}  problem is minor-bidimensional, together with Lemma~\ref{lem:bidimParameterTreewidth} yields the following observation.

\begin{observation}\label{obs:etaTransversalTreewidth}
For every graph class ${\cal G}$ with the SQGM property and every $\eta \geq 0$ there exist constants $\alpha >0$ and $\frac{1}{2}\leq  \constPower < 1$ such that every graph $G \in {\cal G}$  with a non-empty \twetamod  $S$ has treewidth at most $\alpha |S|^\constPower$.
\end{observation}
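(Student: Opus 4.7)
}

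My plan is to simply instantiate Lemma~\ref{lem:bidimParameterTreewidth} with the \textsc{Treewidth-$\eta$-Modulator} problem, which the discussion immediately preceding the observation has already verified is minor-closed and minor-bidimensional. Concretely, I would fix $\eta \geq 0$, let $\Pi$ be \textsc{Treewidth-$\eta$-Modulator}, and invoke Lemma~\ref{lem:bidimParameterTreewidth} on the class ${\cal G}$ (which has the SQGM property by assumption) to obtain constants $\alpha_0 > 0$ and $0 < \epsilon < 1$ such that $\tw(G) \leq \alpha_0 \cdot (OPT_\Pi(G))^\epsilon$ for every $G \in {\cal G}$.

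Next I would observe that, by definition of $OPT_\Pi$, any \twetamod \, $S$ of $G$ is a feasible solution for $\Pi$ on $G$, so $OPT_\Pi(G) \leq |S|$. Substituting into the inequality above gives
\[
\tw(G) \;\leq\; \alpha_0 \cdot (OPT_\Pi(G))^\epsilon \;\leq\; \alpha_0 \cdot |S|^\epsilon,
\]
and setting $\alpha := \alpha_0$ and $\lambda := \epsilon$ yields the desired bound, with $0 \leq \lambda < 1$ coming directly from the conclusion of Lemma~\ref{lem:bidimParameterTreewidth}.

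There is essentially no obstacle here, since all the hard work—verifying that \textsc{Treewidth-$\eta$-Modulator} is minor-closed (obvious), minor-bidimensional (via Proposition~\ref{prop:tw_grid} and the $\lfloor t/(\eta+1)\rfloor^2$ grid lower bound), and separable—has already been carried out in the paragraphs preceding the observation. The only thing to be mindful of is that the constants $\alpha$ and $\lambda$ delivered by Lemma~\ref{lem:bidimParameterTreewidth} depend on $\eta$ through the constant hidden in the minor-bidimensionality bound $OPT_\Pi(\boxplus_t) \geq \lfloor t/(\eta+1)\rfloor^2$, but this is harmless since $\eta$ is fixed in the statement of the observation.
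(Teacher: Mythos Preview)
Your proposal is correct and follows exactly the approach the paper intends: the sentence immediately preceding the observation states that the minor-bidimensionality (and separability) of \textsc{Treewidth-$\eta$-Modulator} ``together with Lemma~\ref{lem:bidimParameterTreewidth} yields'' the observation, and you have simply unpacked that sentence, including the monotonicity step $OPT_\Pi(G)\le |S|$ that turns the bound in $OPT_\Pi(G)$ into a bound in $|S|$.
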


\begin{proof}
Let $\Pi=$\textsc{Treewidth-$\eta$-Modulator}, let $G\in{\cal G}$ and let $S$ be a \twetamod\ of $G$.
As $\Pi$ is minor-bidimensional and ${\cal G}$ has the  SQGM property,
by Lemma~\ref{lem:bidimParameterTreewidth}, there exists  $0 < \alpha$, $\frac{1}{2}\leq  \constPower < 1$ such that $G \in {\cal G}$, $\tw(G) \leq \alpha \cdot  (OPT^{*}_\Pi(G))^{\constPower}$.
If $OPT_{\Pi}(G)=0$, then 
$OPT_{\Pi}^{*}(G)=1$ which implies that $\tw(G)\leq \alpha\leq \alpha |S|^{\constPower}$ (recall that $|S|\geq 1$).
If $OPT_{\Pi}(G)\geq 1$, then $OPT_{\Pi}(G)=OPT_{\Pi}^{*}(G)$, therefore $\tw(G) \leq \alpha (OPT_\Pi(G))^{\constPower}\leq \alpha |S|^{\constPower}$.
\end{proof}

Next we show that the  {\sc Treewidth-$\eta$-Modulator} problems are canonical bidimensional problems in the following sense.

\begin{lemma}\label{lem:bidimTwModulator}
For any real $\consteps > 0$ and minor-bidimensional linear-separable problem $\Pi$ on graph class ${\cal G}$ with the SQGM property, there exists an integer $\eta \geq 0$ such that   every $G \in {\cal G}$ has a \twetamod  $S$ of size at most $\consteps \cdot OPT_\Pi(G).$
\end{lemma}

\begin{proof}
Let $\sigma$ be a constant such that $\Pi$ is $(\sigma \cdot t)$-separable. Let $\alpha'>0$ and $\frac{1}{2}\leq  \constPower < 1$ be the constants from Lemma~\ref{lem:bidimParameterTreewidth}. In particular $\tw(G) \leq \alpha' \cdot (OPT_\Pi(G))^\constPower$ if $OPT_\Pi(G)>0$ and 
$\tw(G) \leq \alpha'$ if $OPT_\Pi(G)=0$.
Set $\alpha = \max\{\alpha', 1\}$. Furthermore, if $\sigma < 1$ then $\Pi$ is $t$-separable, and so we may assume without loss of generality that $\sigma \geq 1$.

We now define a few constants.  The reason  these constants are defined the way they are will become clear during the course of the proof.  Finally we  set $\eta$ based on $\alpha$, $\sigma$, $\constPower$ and $\consteps$. 
\begin{itemize}
\item Set $\rho = \frac{1^\constPower + 2^\constPower - 3^\constPower}{3^\constPower}$ and note that $\rho > 0$.
\item set $\gamma = 4\alpha\sigma$,
\item set $\delta = \frac{\gamma(2\consteps + 1)}{\rho}$, 
\item set $k_0 = (3+3\gamma)^\frac{1}{1-\constPower} + 3\cdot (\frac{\delta}{\consteps})^\frac{1}{1-\constPower}$, 
%
%
(notice that $k_{0}>3$)
 and
\item set $\eta = \alpha \cdot k_0^\constPower$.
\end{itemize}

%
%

We prove, by induction on $k$, the following stronger statement.\medskip

\noindent{\em Claim:} For any integer $k \geq \frac{1}{3}k_0$, every graph $G \in {\cal G}$ such that $OPT_\Pi(G) \leq k$ has a \twetamod  of size at most $\consteps{k} - \delta k^\constPower$.

\noindent{\em Proof of Claim:} 
In the base case we consider any $k$ such that $\frac{1}{3}k_0 \leq k \leq k_0$.  By Lemma~\ref{lem:bidimParameterTreewidth}, any graph $G \in {\cal G}$ such that  $OPT_\Pi(G) \leq k $ has treewidth at most $\alpha \cdot k^\constPower \leq \eta$ (recall that $k\geq \frac{1}{3}k_0\geq 1$). Thus $G$ has a \twetamod  of size $0$.  Also by the definition of $k_0$, we have
that
$k_{0}\geq  3\cdot (\frac{\delta}{\consteps})^\frac{1}{1-\constPower}= (\frac{3\delta}{\consteps\cdot 3^{\constPower}})^{\frac{1}{1-\constPower}}$. Hence $ k_0^{1-\constPower}\geq (\frac{3\delta}{\consteps\cdot 3^{\constPower}})$. This yields that 
$\frac{3^\constPower}{\delta}k_0^{1-\constPower}\geq \frac{3}{\consteps}$, therefore $\delta\frac{1}{3^{\constPower}}\cdot k_0^{\constPower-1}\leq \frac{\consteps}{3}$, and finally, $\delta\frac{1}{3^{\constPower}}\cdot k_0^{\constPower}\leq \frac{\consteps}{3}k_0.$ Consequently,  
\begin{eqnarray} 0 \leq \frac{\consteps \cdot k_0}{3} - \delta \left(\frac{k_0}{3}\right)^\constPower.\label{eq:k02}\end{eqnarray}

%
%
%
%
%
%
%
By  \eqref{eq:k02}, we have that the size of  \twetamod of $G$, which is $0$, satisfies the following 
\[
0 \leq \consteps\frac{1}{3}k_0 - \delta \left(\frac{1}{3}k_0\right)^\constPower \leq \consteps k - \delta k^\constPower.
\] 
In the last inequality we used the fact that for any $\frac{1}{2} \leq \constPower < 1$, $\consteps$ and $\delta,$ the function $\consteps k - \delta k^\constPower$ is monotonically increasing from the first point where it becomes positive. This fact may easily be verified by differentiation. This completes the proof of the base case. 

\smallskip

For the inductive step, let $k > k_0$ and suppose that the statement is true for all values below $k$. We prove the statement for $k$. Consider a graph $G \in {\cal G}$ such that $OPT_\Pi(G) \leq k$. By Lemma~\ref{lem:bidimParameterTreewidth}, the treewidth of $G$ is at most $\tw(G) \leq \alpha \cdot k^\constPower$. By Proposition~\ref{lemma:balsep22} applied to $(G, SOL_\Pi(G))$, there is a 2/3-balanced separation $(A_1, A_2)$ of $(G, SOL_\Pi(G))$ of order at most $\tw(G) + 1 \leq \alpha \cdot k^\constPower + 1$. Let $L = A_1 \setminus A_2$, $S = A_1 \cap A_2$ and $R = A_2 \setminus A_1$. Note that there are no edges from $L$ to $R$. Since $(A_1, A_2)$ is a 2/3-balanced separation it follows that there exists a real $\frac{1}{3} \leq a \leq \frac{2}{3}$ such that $|L \cap SOL_\Pi(G)| \leq a|SOL_\Pi(G)|$ and  $|R \cap SOL_\Pi(G)| \leq (1-a)|SOL_\Pi(G)|$.
 
Consider now the graph $G[L \cup S]$. Since $L$ has no neighbors in $R$ (in $G$) and $\Pi$ is $(\sigma \cdot t)$-separable, it follows that 
\begin{align*}
OPT_\Pi(G[L \cup S]) & \leq |SOL_\Pi(G) \cap (L \cup S)| + \sigma|S| \\
& \leq ak + (\alpha k^\constPower + 1) + \sigma(\alpha k^\constPower + 1) \\
& \leq ak + (\alpha k^\constPower + 1)(\sigma + 1) \leq ak + \gamma k^\constPower.
\end{align*}
Here the last inequality follows from the assumption that $k \geq k_0 \geq 1$ and the choice of $\gamma$. 

We claim that for $k\geq k_0$, 
\begin{equation}\label{eq:k01}ak  + \gamma k ^\constPower \leq k  - 1.\end{equation}
By the choice of $k_0$ and the fact that $\constPower>0$, we have 
\begin{eqnarray}
k_0 \geq  (3+3\gamma)^\frac{1}{1-\constPower}\Rightarrow k_0 \geq (\frac{3}{k_0^\constPower}+3\gamma)^\frac{1}{1-\constPower} \Rightarrow \nonumber\\k_0^{1-\constPower} \geq \frac{3}{k_0^\constPower}+3\gamma\Rightarrow
 \frac{k_0}{3}\geq 1+\gamma k_{0}^{\constPower}\Rightarrow\nonumber\\ 
 \frac{2}{3}k_0 + \gamma k_0^\constPower \leq k_0 - 1\nonumber 
\end{eqnarray} 
%
Because  for $k\geq k_0$ the function $\frac{2}{3}k  + \gamma k ^\constPower - k  + 1$ decreases, which is easily  verified by differentiation. Hence $\frac{2}{3}k  + \gamma k ^\constPower  \leq k  - 1$  and because  $a\leq \frac{2}{3}$, \eqref{eq:k01} follows. 

   Further $ak + \gamma k^\constPower \geq \frac{1}{3}k_0$ since $a \geq \frac{1}{3}$. Thus we may apply the induction hypothesis to $G[L \cup S]$ and obtain a \twetamod  $Z_L$ of $G[L \cup S]$, such that
\begin{align*}
|Z_L| & \leq \consteps(ak + \gamma k^\constPower) - \delta\left(ak + \gamma k^\constPower \right)^\constPower \\
& \leq \consteps(ak + \gamma k^\constPower) - \delta k^\constPower a^\constPower. 
\end{align*}
An identical argument applied to $G[R \cup S]$ yields a \twetamod  $Z_R$ of $G[R \cup S]$, such that
\begin{align*}
|Z_R| & \leq \consteps\left((1-a)k + \gamma k^\constPower \right) - \delta k^\constPower (1-a)^\constPower. 
\end{align*}

We now make a \twetamod  $Z$ of $G$ as follows. Let $Z = Z_L \cup S \cup Z_R$. The set $Z$ is a \twetamod  of $G$ because every connected component of $G- Z$ is a subset of $L$ or $R$, and $Z_L$ and $Z_R$ are \twetamod\!\!s for $G[L \cup S]$ and $G[R \cup S]$ respectively. Finally we bound the size of $Z$.
\begin{align*}
|Z| & \leq |Z_L| + |Z_R| + |S| \\
& \leq  \consteps(ak + \gamma k^\constPower) - \delta k^\constPower a^\constPower +  \consteps\left((1-a)k + \gamma k^\constPower \right) - \delta k^\constPower (1-a)^\constPower + \gamma k^\constPower \\
& = \consteps k - \delta k^\constPower \left((1-a)^\constPower + a^\constPower\right) + k^\constPower \gamma(2\consteps + 1) \\
& \leq \consteps k - \delta k^\constPower + k^\constPower \left(\gamma(2\consteps + 1) - \delta \rho \right) \\
& = \consteps k - \delta k^\constPower
\end{align*}
In the transition from the third to the fourth line we used that $(1-a)^\constPower + a^\constPower - 1 \geq \rho$ for any $a$ between $\frac{1}{3}$ and $\frac{2}{3}$. The claim follows.
\medskip

To conclude the proof, 
  we observe that the statement of the lemma follows from the above claim. If $OPT_\Pi(G) \leq k_0$ then $\tw(G)\leq \alpha \cdot k_0^\constPower=\eta$ and the empty set is a  \twetamod  of $G$ of size $0\leq  \consteps \cdot OPT_\Pi(G)$. If  $OPT_\Pi(G) > k_0$ then $G$ has a \twetamod  of size at most $\consteps \cdot OPT_\Pi(G) - \delta (OPT_\Pi(G))^\constPower\leq \consteps \cdot OPT_\Pi(G) $. This completes the proof.
\end{proof}

An identical argument yields an analogous lemma for contraction-bidimensional problems.
The only difference is that we now use Lemma~\ref{lem:bidimParameterTreewidthSep} instead of Lemma~\ref{lem:bidimParameterTreewidth}.
\begin{lemma}\label{lem:contrBidimTwModulator}
For any real $\consteps > 0$ and contraction-bidimensional linear-separable problem $\Pi$ on graph class ${\cal G}$ with the SQGC property, there exists an integer $\eta \geq 0$ such that any graph $G \in {\cal G}$ has a \twetamod  $S$ of size at most $\consteps \cdot OPT^*_\Pi(G).$
\end{lemma}

\subsection{Protrusion decomposition}\label{subsec:protr_decomps}
 The notions  of 
 \emph{protrusion}  and  \emph{protrusion decomposition} were introduced in \cite{BodlaenderFLPST16meta}.

\begin{definition}{\rm [\bf $t$-protrusion]}
For a graph $G$, a set $X \subseteq V(G)$ is a {\em $t$-protrusion} of $G$ if $|\partial(X)|\leq t$ and $\tw(G[X]) \leq t.$
\end{definition}

\begin{definition}{\rm [{\bf $(\alpha,r)$-Protrusion decomposition]}}
An $(\alpha,r)${\em -protrusion decomposition} of a graph $G$
is  a sequence ${\cal P}=\langle R_{0},R_{1},\ldots,R_{\ell}\rangle$ of pairwise disjoint subsets of $V(G)$ 
such that \begin{itemize}
\item $\bigcup_{i\in\{1,\ldots,\ell\}}=V(G),$
\item $\max\{\ell, |R_{0}|\}\leq \alpha,$  
\item each $R^{+}_{i}=N_{G}[R_i],$ $i\in\{1,\ldots,\ell\},$ is an $r$-protrusion of $G,$ and 
\item for every ${i\in\{1,\ldots,\ell\}},\ N_{G}(R_{i})\subseteq R_0.$ 
\end{itemize}
We call the sets $R^{+}_{i},$ $i\in\{1,\ldots,\ell\},$ the  {\em protrusions} of ${\cal P}$ and the set $R_0$ the {\em core} of ${\cal P}$.
\end{definition}

Next we prove that the existence of  a \twetamod  $S$ implies the existence of a protrusion decomposition of $G$ into $r$-protrusions with a core  not much bigger than $S$, and $r$ only depending on $\eta$ (and the graph class ${\cal G}$ we are working with). The inductive proof of the following lemma is similar to the proof of Lemma~\ref{lem:bidimTwModulator} about \twetamod, however this time we use  induction to construct the required  protrusion decomposition.  

\begin{lemma}\label{lem:protrus_decomp}
Let  ${\cal G}$ be a graph class with the SQGM property. For any real $\consteps > 0$ and positive integer $\eta$, there exists an integer $r$ such that if $G \in {\cal G}$ has a non-empty \twetamod  $S$, then $G$ has $((1+\consteps)|S|,r)$-protrusion decomposition ${\cal P}$ with $S$ contained in the core of ${\cal P}$.
\end{lemma}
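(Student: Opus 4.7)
The plan is to follow the inductive blueprint of Lemma~\ref{lem:bidimTwModulator}, performing strong induction on $|S|$. Let $\alpha \geq 1$ and $\lambda \in [0,1)$ be the constants supplied by Observation~\ref{obs:etaTransversalTreewidth} for ${\cal G}$ and $\eta$, so that every $G \in {\cal G}$ with a \twetamod\ $S$ satisfies $\tw(G) \leq \alpha|S|^\lambda$. Mimicking the device used in Lemma~\ref{lem:bidimTwModulator}, I would actually establish a strengthened claim for $|S|$ above a threshold $s_0$: namely, $G$ admits an $(N, r)$-protrusion decomposition with $S$ contained in the core and $N \leq (1+\epsilon)|S| - \delta|S|^\lambda$, for constants $\delta, r, s_0$ depending only on $\epsilon, \eta, \alpha, \lambda$. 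The subtractive slack $\delta|S|^\lambda$ is what will absorb the overhead introduced by the balanced separator at each split, and the weak form requested by the lemma then follows because $\delta|S|^\lambda \geq 0$.

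In the base case ($|S| \leq s_0$), I would set $R_0 := S$ and collect the rest into a single block $R_1 := V(G) \setminus S$ (disconnectedness of $R_1$ is harmless, since the protrusion definition is insensitive to it). Then $N_G(R_1) \subseteq S \subseteq R_0$, the boundary satisfies $|\partial_G(R_1^+)| \leq s_0$, and $\tw(G[R_1^+]) \leq \tw(G) \leq \alpha s_0^\lambda$, so $R_1^+$ is an $r$-protrusion as soon as $r \geq \max\{s_0, \alpha s_0^\lambda, \eta\}$. Since $\max\{|R_0|, \ell\} = |S| \leq (1+\epsilon)|S|$, the lemma holds in this regime.

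In the inductive step ($|S| > s_0$), I would apply Proposition~\ref{lemma:balsep22} to $(G, S)$ to obtain a $2/3$-balanced separation $(A_1, A_2)$ of order $|T| \leq \alpha|S|^\lambda + 1$, with $T := A_1 \cap A_2$, $L := A_1 \setminus A_2$, $R := A_2 \setminus A_1$. The graphs $G_L := G[L \cup T]$ and $G_R := G[R \cup T]$ each inherit a \twetamod, namely $S_L := (S \cap L) \cup T$ and $S_R := (S \cap R) \cup T$, of size at most $\tfrac{2}{3}|S| + \alpha|S|^\lambda + 1$, which is strictly less than $|S|$ once $s_0$ is chosen large enough (exploiting $\lambda < 1$). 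The inductive hypothesis then supplies decompositions ${\cal P}_L$ and ${\cal P}_R$; I would merge them by setting $R_0 := R_0^L \cup R_0^R$ (which contains $S$) and taking all non-core blocks of ${\cal P}_L$ and ${\cal P}_R$ as the protrusions of the merged decomposition. The absence of edges between $L$ and $R$ in $G$ gives, for every protrusion $R_i^L$ of ${\cal P}_L$, that $N_G(R_i^L) = N_{G_L}(R_i^L) \subseteq R_0^L \subseteq R_0$ and $G[R_i^{L+}] = G_L[R_i^{L+}]$; the analogous statement for ${\cal P}_R$ holds symmetrically, so $r$-protrusionhood is preserved when passing back to $G$.

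The main obstacle is the arithmetic bookkeeping that shows the merged decomposition meets the strengthened bound. Using $T \subseteq R_0^L \cap R_0^R$ one has $|R_0| \leq |R_0^L| + |R_0^R| - |T|$ and $\ell \leq \ell_L + \ell_R$, so the inductive hypothesis gives a bound on $\max\{|R_0|, \ell\}$ of the form $(1+\epsilon)(|S_L| + |S_R|) - \delta(|S_L|^\lambda + |S_R|^\lambda) + O(|T|)$. Since $|S_L| + |S_R| \leq |S| + 2|T|$, closing the induction reduces to an inequality essentially of the shape
\[
\delta(|S_L|^\lambda + |S_R|^\lambda - |S|^\lambda) \geq c_1 |T|
\]
for some $c_1$ depending on $\epsilon$. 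The balance of the separator forces $|S_L|, |S_R| \geq \tfrac{1}{3}|S|$ (using $|S \cap T| \leq |T|$ together with $|S \cap L|, |S \cap R| \leq \tfrac{2}{3}|S|$), and strict concavity of $x \mapsto x^\lambda$ then yields $|S_L|^\lambda + |S_R|^\lambda - |S|^\lambda \geq \rho|S|^\lambda$ for a positive constant $\rho$, mirroring the $(1-a)^\lambda + a^\lambda - 1 \geq \rho$ step of Lemma~\ref{lem:bidimTwModulator}. Choosing $\delta$ large enough relative to $\rho, \alpha, \epsilon$, and $s_0$ large enough to swallow the lower-order additive constants from the $+1$ in $|T|$, closes the induction; finally $r$ is set to accommodate both the base case and the treewidth and boundary bounds on each constructed block, yielding the desired $((1+\epsilon)|S|, r)$-protrusion decomposition.
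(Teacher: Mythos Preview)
Your proposal follows essentially the same inductive argument as the paper: strong induction driven by a $2/3$-balanced separator of $(G,S)$, a strengthened bound $(1+\epsilon)|S|-\delta|S|^\lambda$ to absorb the separator overhead, and merging the two recursive protrusion decompositions by taking the union of their cores. The verification that each recursive protrusion $R_i^L$ (resp.\ $R_i^R$) lies entirely in $L$ (resp.\ $R$) because $T$ sits in both recursive cores, and hence remains an $r$-protrusion in $G$, is exactly the paper's argument, as is the concavity step $(1-a)^\lambda+a^\lambda-1\geq\rho$.

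There is one point where your write-up would fail as stated. Your base case ($|S|\leq s_0$) establishes only the weak bound $\max\{|R_0|,\ell\}\leq |S|\leq (1+\epsilon)|S|$, whereas your inductive step invokes the \emph{strengthened} bound $(1+\epsilon)|S_L|-\delta|S_L|^\lambda$ for the recursive calls. But $|S_L|,|S_R|$ are only guaranteed to be at least $\tfrac{1}{3}|S|$, so when $|S|\in(s_0,3s_0]$ the recursion lands in the base-case range where you have not proved the strengthened inequality, and the bookkeeping does not close. The paper fixes this by making the base-case window $[\tfrac{1}{3}k_0,k_0]$ and verifying the strengthened bound there as well: the trivial one-protrusion decomposition has cost $|S|\leq k$, so it suffices that $k\leq(1+\epsilon)k-\delta k^\lambda$, i.e.\ $\epsilon k\geq\delta k^\lambda$, which holds once $k\geq(\delta/\epsilon)^{1/(1-\lambda)}$; the threshold $k_0$ is then chosen to satisfy this together with $\tfrac{2}{3}k_0+2\alpha k_0^\lambda\leq k_0-1$. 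With that adjustment to your base case, your proof is the paper's proof.
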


\begin{proof}
Let $\alpha'>0$ and $\frac{1}{2} \leq \constPower < 1$ be the constants from Observation~\ref{obs:etaTransversalTreewidth}, in particular any graph in ${\cal G}$ with a non-empty  \twetamod  of size at most $k$ has treewidth at most $\alpha'  k^\constPower$. Set $\alpha = \max\{\alpha', 1\}$. 
We define a series of constants. As in the beginning of the 
proof of Lemma~\ref{lem:bidimTwModulator} the purpose of these constants will become  clear during the course of the proof. 

\begin{itemize}
\item We set $\rho = \frac{1^\constPower + 2^\constPower - 3^\constPower}{3^\constPower}$ and note that $\rho > 0$.
\item We define  $\delta = \frac{(1 + \consteps)4\alpha}{\rho}$,  and 
\item  $k_0 = (3+6\alpha)^\frac{1}{1-\constPower} + 3 \cdot (\frac{\delta}{\consteps})^\frac{1}{1-\constPower}$.
 
\item   Finally, we set $r = \max\{k_0, \alpha k_0^\constPower, \eta\}$
\end{itemize}
We first proof, using induction, the following claim:\medskip

\noindent{\em Claim:} {for every $k \geq \frac{1}{3}k_0$, if a  graph $G \in {\cal G}$  has a non-empty \twetamod  $S$ of size at most $k$, then $G$ has $((1+\consteps) k - \delta k^\constPower, r)$-protrusion decomposition with $S$ contained in the core of this protrusion decomposition.  }\smallskip

%
\noindent{\em Proof of claim:} 
In the base case we consider any $k$ such that $\frac{1}{3}k_0 \leq k \leq k_0$. By Observation~\ref{obs:etaTransversalTreewidth},  any graph that has a non-empty  \twetamod  $S$ of size at most $k $ has treewidth at most  $\alpha'  k^\constPower\leq \alpha  k_0^\constPower\leq r$. Consider the protrusion-decomposition ${\cal P}=\langle S,V(G)\setminus S\rangle$, i.e., ${\cal P}$ has core $S$ and only one protrusion, namely $V(G) \setminus S$. Since $k\leq k_0 \leq r$,   
 this is a $(k,r)$-protrusion decomposition. To complete the proof of the base case we need to show that $k \leq (1+\consteps)k - \delta k^\constPower$.  As in Lemma~\ref{lem:bidimTwModulator}, by the choice of  $k_0$, we have that  $0 \leq \consteps\frac{1}{3}k_0 - \delta \left(\frac{1}{3}k_0\right)^\constPower$ (see~\eqref{eq:k02}).  Hence 
\[
0 \leq \consteps\frac{1}{3}k_0 - \delta \left(\frac{1}{3}k_0\right)^\constPower \leq \consteps k - \delta k^\constPower\Rightarrow k \leq (1+\consteps)k - \delta k^\constPower.
\] 
 In the last inequality we used that for any $\frac{1}{2} \leq \constPower < 1$, $\consteps > 0$ and $\delta>0$ the function $\consteps k - \delta k^\constPower$ is monotonically increasing from the first point where it becomes positive. 
 Thus the base case follows.

For the inductive step we let $k > k_0$ and assume that the statement holds for all values less than $k$.  
 To prove the statement for $k$, let us  consider a graph $G \in {\cal G}$ and a non-empty \twetamod  $S$ of $G$ of size at most $k$. If $|S| < k$ then by the induction hypothesis, $G$ has a $((1+\consteps)(k-1) - \delta (k-1)^\constPower, r)$-protrusion decomposition. Since the function $(1+\consteps)k - \delta k^\constPower$ is non-decreasing whenever it is non-negative and $k-1 \geq k_0$ it follows that this is also a $((1+\consteps)k - \delta k^\constPower, r)$-protrusion decomposition. Thus we may assume that $|S| = k$.

By Observation~\ref{obs:etaTransversalTreewidth}, the treewidth of $G$ is at most $\tw(G) \leq \alpha k^\constPower$. We apply  Lemma~\ref{lemma:balsep22}  to $(G, S)$ and obtain  a 2/3-balanced separation $(A_1, A_2)$ of $(G, S)$ of order at most $\tw(G) + 1 \leq \alpha \cdot k^\constPower + 1$. Let $L = A_1 \setminus A_2$, $X = A_1 \cap A_2$ and $R = A_2 \setminus A_1$. 
As $|X|\leq \tw(G)+1$ we obtain that $|X|\leq \alpha\cdot  k^\constPower + 1\leq 2\alpha k^\constPower $ (here we use the fact that $k\geq \frac{k_0}{3}\geq 1$).
 Since $(A_1, A_2)$ is a 2/3-balanced separation,  it follows that there exists a real $\frac{1}{3} \leq a \leq \frac{2}{3}$ such that $|L \cap S| \leq a|S|$ and  $|R \cap S| \leq (1-a)|S|$.

Consider now the graph $G[L \cup X]$. Let $S_L = (S \cap L) \cup X$. Clearly $S_L$ is a \twetamod  of $G[L \cup X]$ and  $|S_L| =|S\cap L|+|X|\leq a|S| + |X| \leq ak + 2\alpha k^\constPower$.
In order to proceed with the induction, we have to verify that  for $k>k_0$,  
 
 \begin{equation}\label{eq:k03}
ak  + 2\alpha k ^\constPower \leq k  - 1 
  \end{equation}
  The proof of \eqref{eq:k03} is very similar to the proof of \eqref{eq:k01} from Lemma~\ref{lem:bidimTwModulator}. 
By the choice of $k_0$, we have 
\begin{eqnarray}
k_0 \geq  (3+6\alpha)^\frac{1}{1-\constPower}\Rightarrow k_0 \geq (\frac{3}{k_0^\constPower}+6\alpha)^\frac{1}{1-\constPower} \Rightarrow \nonumber\\k_0^{1-\constPower} \geq \frac{3}{k_0^\constPower}+6\alpha\Rightarrow
 \frac{k_0}{3}\geq 1+2\alpha k_{0}^{\constPower}\Rightarrow\nonumber\\ 
 \frac{2}{3}k_0 + 2\alpha k_0^\constPower \leq k_0 - 1\label{tyso}
\end{eqnarray} 
%
%
Now, because of~\eqref{tyso}, it is easy to   verify by differentiation that the inequality $\frac{2}{3}k  + 2\alpha k ^\constPower \leq k  - 1$ holds for every $k>k_0$. Thus \eqref{eq:k03} follows.  Notice also that $S_{L}$ is non-empty because 
$$|S_L| \geq |S \setminus R| =|S|-|S\cap R|\geq |S|-(1-a)|S|= a|S| \geq \frac{1}{3}k \geq \frac{1}{3}k_0 \geq 1,$$
and therefore  we may apply the induction hypothesis to $G[L \cup X]$  with $S_L$ as \twetamod. We  obtain a $((1+\consteps)|S_L|-\delta|S_L|^\constPower, r)$-protrusion decomposition ${\cal P}_L$ of $G[L \cup X]$ with core containing $S_L$.

We now consider the graph $G[R \cup X]$ and define  $S_R = (S \cap R) \cup X$. Working symmetrically to the case of $S_L$ it is possible to deduce  that $|S_R| \leq (1-a)k + 2\alpha k^\constPower$,  $(1-a)k  + 2\alpha k ^\constPower \leq k  - 1$, and $|S_R|\geq 1$, therefore we can apply the induction hypothesis to $G[R \cup X]$ and  have a  $((1+\consteps)|S_R|-\delta|S_R|^\constPower, r)$-protrusion decomposition  ${\cal P}_R$ of $G[R \cup X]$ with core containing $S_R$.

From the protrusion decompositions ${\cal P}_L$ and ${\cal P}_R$ we construct a protrusion decomposition ${\cal P}$ of $G$. The core of ${\cal P}$ is the union of the cores of ${\cal P}_L$ and ${\cal P}_R$. The set of protrusions of ${\cal P}$ is the union of the set of protrusions of ${\cal P}_L$ and ${\cal P}_R$ respectively. Since the cores of ${\cal P}_L$ and ${\cal P}_R$ contain $X$,  the protrusions of ${\cal P}_L$ and ${\cal P}_R$ are also protrusions in $G$. Therefore ${\cal P}$ is a $((1+\consteps)(|S_L|+|S_R|)-\delta(|S_L|^\constPower + |S_R|^\constPower), r)$-protrusion decomposition of $G$ containing $S$ in its core. Thus, to finish the proof of the claim  it is sufficient to show the following 
\begin{eqnarray}
(1+\consteps)(|S_L|+|S_R|)-\delta(|S_L|^\constPower + |S_R|^\constPower) \leq (1+\consteps)k - \delta k^\constPower.\label{op5o}
\end{eqnarray}
We now proceed with the proof of~\eqref{op5o}.
Since $|S_L| \leq ak + 2\alpha k^\constPower$ and $|S_R| \leq(1-a)k + 2\alpha k^\constPower$ it follows that 
\begin{eqnarray}
(1+\consteps)(|(S_L|+|S_R|) \leq (1+\consteps)(k + 4\alpha k^\constPower).\label{ehop}
\end{eqnarray}
Recall that $S_{L}=S\setminus (R\cap S)$, therefore $|S_{L}|=|S|-|R\cap S|\geq |S|-(1-a)|S|= a|S|=ak$. Working analogusly on $S_{R}$, we can show that
$|S_{R}|\geq (1-a)k$. The two last inequalities imply that  $|S_{L}|^{\constPower}+|S_{R}|^{\constPower}\geq (ak)^{\constPower}+((1-a)k)^{\constPower}$, therefore 
\vspace{-8mm}

\begin{eqnarray}
\delta(|S_L|^\constPower + |S_R|^\constPower) & \geq & \delta k^\constPower (a^\constPower + (1-a)^\constPower).\label{opio}
\end{eqnarray}
Departing from~\eqref{ehop} and~\eqref{opio} we prove~\ref{op5o} as follows.
\begin{align*}
& ~(1 + \consteps)(|S_L|+|S_R|)-\delta(|S_L|^\constPower + |S_R|^\constPower) \\
\leq & ~ (1+\consteps)(k + 4\alpha k^\constPower) -  \delta k^\constPower (a^\constPower + (1-a)^\constPower) \\
= & ~ (1+\consteps)k - \delta k^\constPower + k^\constPower(4\alpha(1+\consteps) - \delta(a^\constPower + (1-a)^\constPower - 1)) \\
\leq & ~ (1+\consteps)k - \delta k^\constPower + k^\constPower(4\alpha(1+\consteps) - \rho\delta) \\
= & ~ (1+\consteps)k - \delta k^\constPower
\end{align*}
In the transition from the third to the fourth line we used that $(1-a)^\constPower + a^\constPower - 1 \geq \rho$ for any $a$ between $\frac{1}{3}$ and $\frac{2}{3}$. This completes the proof of the claim.
\medskip

We have now proved that for any $k \geq \frac{1}{3}k_0$ and for every graph $G \in {\cal G}$,   
if 
$G$ has a non-empty  \twetamod  $S$ of size at most $k$, then $G$ has  an  
$((1+\consteps) k - \delta k^\constPower, r)$-protrusion decomposition, which is also a  $((1+\consteps)k, r)$-protrusion decomposition containing $S$ in its core.

In the remaining case, where $G$ has a non-empty \twetamod  $S$ and $|S| \leq \frac{k_0}{3}$ then the protrusion decomposition ${\cal P}=\langle S,V(G) \setminus S\rangle$, where $S$ is the core and $V(G) \setminus S$ is the unique protrusion, is an $(|S|, \eta)$-protrusion decomposition. Since $|S| \leq (1+\consteps)|S|$ and $\eta \leq r$ this completes the proof of the lemma.
\end{proof}

Lemmata~\ref{lem:bidimTwModulator},~\ref{lem:contrBidimTwModulator}, and~\ref{lem:protrus_decomp}
along with Observation~\ref{ovsSQGCSQGM},  imply the following theorem, which is the first main technical contribution of this paper. 

%
%

%

\begin{theorem}\label{thm:protrusiondecomp}
Let  ${\cal G}$ be a graph class with the SQGM (resp. SQGC) property and $\Pi$ be a minor-bidimensional (resp. contraction-bidimensional) linear-separable problem. 
There exists a constant $c$ such that every graph 
  $G\in {\cal G}$,
  admits a $(c\cdot k,c)$-protrusion decomposition, where $k=OPT_{\Pi\doublecap{\cal G}}(G)$.
\end{theorem}
\begin{proof}
Let $\Pi$ be a minor-bidimensional linear-separable problem and $k=OPT_{\Pi\doublecap{\cal G}}(G)$. 
We apply Lemma~\ref{lem:bidimTwModulator} for $\consteps=1$, 
and we deduce that   there exists an integer $\eta \geq 0$ such that   graph $G $ has a \twetamod  $S$ of size at most $k.$ 
If $|S|> 0$, then we can apply Lemma~\ref{lem:protrus_decomp}, for $\consteps=1$
and obtain that  there exists an integer $r$ such that $G$ admits a $(2\cdot k,r)$-protrusion decomposition. 
If $S=\emptyset$, then $\tw(G)\leq \eta$ and $\langle S,V(G)\rangle$ is trivially an $(2\cdot k, \eta)$-protrusion decomposition of $G$.
In any case, by setting $c=\max\{2,r,\eta\}$, we have that $G$ admits a $(c\cdot k,c)$-protrusion decomposition, as required.

The proof for the case where $\Pi$ is contraction-bidimensional 
is the same as above with the difference that we now apply Lemma~\ref{lem:contrBidimTwModulator}
instead of Lemma~\ref{lem:bidimTwModulator}  and, before we apply Lemma~\ref{lem:protrus_decomp},
we use Observation~\ref{ovsSQGCSQGM} in order to show that ${\cal G}$ has the SQGM property.
\end{proof}
%
%

%
\section{Finite index and finite integer index}\label{sec:FII}

In this section we prove the second main technical contribution of the paper: ``CMSO $+$ separability $\Rightarrow$ finite integer index''. 
\subsection{Definitions on boundaried graphs} 
\paragraph{Boundaried graphs.}
A {\em boundaried graph} ${\bf G}=(G,B,\lambda)$ is a triple consisting of a graph $G$, a set $B\subseteq V(G)$ 
of  distinguished vertices and an injective labelling ${\lambda}$
from $B$  to the set $\Bbb{Z}_{> 0}$. The set $B$ is called the {\em {boundary}} of ${\bf G}$ and  the vertices in $B$  are called  {\em boundary vertices} or {\em {terminals}}. Also, the graph $G$ is the {\em underlying graph} of ${\bf G}$. 
Given a boundaried graph ${\bf G}=(G,B,\lambda)$,
 we define its {\em label set} by ${\Lambda({\bf G})}=\{\lambda(v)\mid v\in B\}$.
If ${\bf G}=(G,B,\lambda)$  is a boundaried graph and $S\subseteq V(G)$, then  the pair
$\hat{{\bf G}}=({\bf G},S)$ is an {\em annotated boundaried graph} and we say that $G$ is the {\em underlying graph} of $\hat{\bf G}$,
$B$ is its {\em boundary} (labeled by $\lambda$), and $S$ is its {\em annotated set}.

Given a finite set $I\subseteq \Bbb{Z}_{> 0}$, we define 
${{\cal F}_{I}}$ (resp. $\hat{{\cal F}_{I}}$)  as  the class of all boundaried graphs (resp. annotated boundaried graphs) whose label set is $I$.
Similarly, we define ${\cal F}_{\subseteq I}=\bigcup_{I'\subseteq I}{\cal F}_{I'}$ (resp. $\hat{\cal F}_{\subseteq I}=\bigcup_{I'\subseteq I}\hat{\cal F}_{I'}$).
We also denote by ${{\cal F}}$ (resp. $\hat{\cal F}$) the class of all boundaried graphs, i.e., ${\cal F}={\cal F}_{\subseteq \Bbb{Z}_{> 0}}$ (resp.  $\hat{\cal F}=\hat{\cal F}_{\subseteq \Bbb{Z}_{> 0}}$).
Finally we say that a boundaried graph ${\bf G}$ is a {\em $t$-boundaried} graph if $\Lambda({\bf G})\subseteq \{1,\ldots,t\}$.

Let ${\cal G}$ be a class of (not boundaried)  graphs.
We say that a boundaried graph ${\bf G}$ {\em belongs to  ${\cal G}$} if the underlying graph of ${\bf G}$ belongs to ${\cal G}.$
We also use $V({\bf G})$ to denote the vertex set of the underlying graph of ${\bf G}$.

%

%
%

\paragraph{The gluing operation.}
Let ${\bf G}_1=(G_{1},B_{1},\lambda_{1})$ and ${\bf G}_2=(G_{2},B_{2},\lambda_{2})$ be two  boundaried graphs. We denote by ${\bf G}_1 {\oplus} {\bf G}_2$ the  graph 
(not boundaried) obtained by taking the disjoint union of $G_1$ and $G_2$ and identifying equally-labeled vertices
of the boundaries of $G_{1}$ and $G_{2}.$ The gluing operation maintains edges of both graphs that are glued, i.e.,
in ${\bf G}_1 \oplus {\bf G}_2$ there is an edge between two labeled 
vertices if there is either an edge between them in $G_1$ or in $G_2,$  (in case of multi-graphs,
multiplicities of edges are summed up in the new graph).  
%

Let $G={\bf G}_{1}\oplus {\bf G}_{2}$ where ${\bf G}_1=(G_{1},B_{1},\lambda_{1})$ and ${\bf G}_2=(G_{2},B_{2},\lambda_{2})$ are boundaried graphs.
We define the {\em glued} set of ${\bf G}_{i}$ as the set $B_{i}^\star=\lambda_{i}^{-1}(\Lambda({\bf G}_{1})\cap \Lambda({\bf G}_{2})), i=1,2$. For a vertex $v\in V({\bf G}_{1})$ we define its {\em {\em heir}} ${{\sf heir}(v)}$ in 
$G$ as follows: if $v\not\in B_{1}^\star$ then ${\sf heir}(v)=v$, otherwise ${\sf heir}(v)$ is the result of the identification 
of $v$ with an equally labeled vertex in ${\bf G}_{2}$. The {\em heir} of a vertex in $G_{2}$ is defined symmetrically. The {\em {\em common boundary}} of ${\bf G}_{1}$ and ${\bf G}_{2}$ in $G$ is equal 
to ${\sf heir}(B_{1}^\star)={\sf heir}(B_{2}^\star)$ where the evaluation of {\sf heir} on vertex sets is defined in the obvious way.

Let now $\hat{\bf G}_{1}=({\bf G}_{1},S_{1})$ and 
 $\hat{\bf G}_{2}=({\bf G}_{2},S_{2})$ be two annotated boundaried graphs.
 We define $\hat{\bf G}_{1}\oplus \hat{\bf G}_{2}$ as the annotated graph $\hat{G}=({\bf G}_1 \oplus {\bf G}_2,{\sf heir}(S_{1})\cup {\sf heir}(S_{2}))$.  


\subsection{Finite Index}
Let $\phi$ be a predicate on annotated graphs. In other words $\phi$ is a function that takes as input a graph $G$ and vertex set $S \subseteq V(G)$, and outputs true or false. We define a {{\em canonical equivalence relation}} ${\equiv_\phi}$ on boundaried annotated graphs as follows. For two annotated boundaried graphs $\hat{\bf G}_{1}=({\bf G}_{1},S_{1})$ and $\hat{\bf G}_{2}=({\bf G}_{2},S_{2})$, we say that $\hat{\bf G}_{1} \equiv_\phi \hat{\bf G}_{2}$ if  
$\Lambda({\bf G}_{1}) =  \Lambda({\bf G}_{2})$
and for every annotated boundaried graph $\hat{\bf G}=({\bf G},S)$ we have that
\begin{eqnarray*} 
\phi(\hat{\bf G}_{1}\oplus\hat{\bf G}) = {\sf true}  \Leftrightarrow  \phi(\hat{\bf G}_{2}\oplus\hat{\bf G}) = {\sf true} 
\end{eqnarray*}
It is easy to verify that $\equiv_\phi$ is an equivalence relation. We say that $\phi$ is {{\em finite state}} if, for 
every  finite $I\subseteq \Bbb{Z}_{>0},$ the equivalence relation $\equiv_\phi$ has a finite number of equivalence classes when 
restricted to $\hat{\cal F}_{I}$.  
%
A formal proof of the following fact can be found in~\cite{BodlaenderFLPST16meta,Courcelle97}.
\begin{proposition}\label{propCMSOann}
For every CMSO-definable predicate $\phi$ on annotated graphs, $\phi$ has finite state.
\end{proposition}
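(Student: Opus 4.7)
The plan is to use the composition method via CMSO types (Hintikka formulas). Let $\psi$ be the CMSO sentence on annotated graphs defining $\phi$, let $q$ be its quantifier rank, and fix a finite set $I \subseteq \mathbb{Z}_{>0}$. I enrich the CMSO vocabulary with a constant symbol $c_i$ for every $i \in I$ (to name the boundary vertex carrying label $i$) and with one free monadic set variable $X$ (to name the annotated set). For an annotated boundaried graph $\hat{\bf G} = ({\bf G}, S) \in \hat{\cal F}_I$, I define the \emph{$q$-type} $\tau_q(\hat{\bf G})$ to be the set of CMSO formulas in this enriched vocabulary, of quantifier rank at most $q$, that are satisfied when $c_i$ is interpreted as $\lambda^{-1}(i)$ and $X$ as $S$.

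The first step is to observe that only finitely many $q$-types arise. By the standard normal-form argument for CMSO (together with the fact that there are only finitely many modular counting predicates of rank at most $q$ up to logical equivalence), the set of CMSO formulas of quantifier rank at most $q$ with the fixed finite signature above, taken up to logical equivalence, is finite. Hence $\tau_q(\hat{\bf G})$ ranges over a finite set as $\hat{\bf G}$ varies in $\hat{\cal F}_I$.

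The second step, which is the crux, is a composition lemma: the $q$-type $\tau_q(\hat{\bf G}_1 \oplus \hat{\bf G}_2)$ is determined by $\tau_q(\hat{\bf G}_1)$ and $\tau_q(\hat{\bf G}_2)$ together with the agreement on common labels. I would prove this by an Ehrenfeucht--Fra\"iss\'e analysis on the gluing: any play on $\hat{\bf G}_1 \oplus \hat{\bf G}_2$ decomposes into a play on each side, synchronised on the glued (common) boundary which is already named by constants. For ordinary first-order/monadic quantifiers this reproduces the classical Feferman--Vaught theorem. For the modular counting atoms $\mathbf{card}_{q',r}(T)$, one extends the notion of type to record, for every existentially quantified set variable still in scope, the residue class modulo $r$ of its intersection with each side; since the relevant moduli $r$ appearing in $\psi$ form a finite set, this remains finite information and composes additively across the gluing.

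Once these two ingredients are in place, the conclusion is immediate: if $\tau_q(\hat{\bf G}_1) = \tau_q(\hat{\bf G}_2)$ and $\Lambda({\bf G}_1) = \Lambda({\bf G}_2)$, then for every $\hat{\bf G} \in \hat{\cal F}$ the composition lemma gives $\tau_q(\hat{\bf G}_1 \oplus \hat{\bf G}) = \tau_q(\hat{\bf G}_2 \oplus \hat{\bf G})$, and in particular the closed sentence $\psi$ (which has quantifier rank $q$) holds on one side iff it holds on the other. Thus $\equiv_\phi$ has at most as many classes on $\hat{\cal F}_I$ as there are $q$-types of annotated $I$-boundaried graphs, which is finite. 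The main obstacle is the composition lemma in the CMSO (as opposed to plain MSO) setting, since modular counting atoms fall outside the classical Ehrenfeucht--Fra\"iss\'e framework; the remedy outlined above, namely enlarging types to track modular residues additively across the gluing, is what gives CMSO its finite-state property in full generality.
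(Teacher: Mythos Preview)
Your proposal is correct and follows the standard route: bounded-rank CMSO types are finite in number, a Feferman--Vaught style composition lemma shows the type of a glued structure is determined by the types of the pieces (with the modular counting handled by tracking residues on each side), and hence equality of types refines $\equiv_\phi$. The paper itself does not prove this proposition at all; it simply cites \cite{BodlaenderFLPSTmetaArxiv,Courcelle97} for a formal proof. Your sketch is essentially the argument one finds in those references (particularly Courcelle's treatment), so there is nothing to compare beyond noting that you have supplied what the paper outsources.
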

Given a  CMSO-definable predicate $\phi$ on annotated graphs and an $I\subseteq \Bbb{Z}_{>0}$, we use the notation $\hat{\cal R}_{\phi,I}$ for a set containing one minimum-size representative from each of the equivalence classes of $\equiv_{\phi}$ when restricted to $\hat{\cal F}_{I}$. 

\subsection{Finite Integer Index} 
\label{subsec:finiinteginde}

\begin{definition}{\rm [\bf Canonical equivalence on boundaried graphs.]}
Let $\Pi$ be a parameterized graph problem whose instances are pairs of the form $(G,k).$
 Given two boundaried graphs ${\bf G}_1,{\bf G}_2~\in {\cal F},$ we say that \term{${\bf G}_1\!\equiv _{\Pi}\! {\bf G}_2$} if 
$\Lambda({\bf G}_{1})=\Lambda({\bf G}_{2})$
 and there exists a \term{{\em transposition constant}}
$c\in\Bbb{Z}$ such that 
\begin{eqnarray*}
\forall({\bf F},k)\in {\cal F}\times \Bbb{Z} &&  ({\bf G}_1 \oplus {\bf F}, k) \in \Pi \Leftrightarrow ({\bf G}_2 \oplus {\bf F}, k+c) \in \Pi.\label{eq:fiidef}
\end{eqnarray*}
\end{definition}
Note that  the relation $\equiv_{\Pi}$  is
an equivalence relation. Observe that $c$ could be negative in the above definition. This is the reason we gave the definition of parameterized problems to include negative parameters also.

Notice that two  boundaried graphs with different label sets belong to 
different equivalence classes of $\equiv_{\Pi}.$ 
We are now in position  to give the following definition.

\begin{definition}{\rm [\bf Finite Integer Index]}
\label{def:deffii}
A parameterized graph problem $\Pi$ whose instances are pairs of the form $(G,k)$
has {\em Finite Integer Index} (or simply has \term{{\em FII}}), if and only if for every finite $I\subseteq \Bbb{Z}^+,$
the number of equivalence classes of  $\equiv_{\Pi}$ that are subsets of ${\cal F}_{I}$
is finite. 
\end{definition} 

The notion of FII first appeared in the works of ~\cite{BodlaendervA01a,Fluiter97} and is similar to the
notion of {\em finite state} \cite{AbrahamsonF93,BoriePT92,Courcelle90}. 

%
%


\subsection{A condition for proving  FII}
\begin{theorem}
\label{fiiopoiok}
If $\Pi$ is a separable CMSO-minimization or CMSO-maximization problem, then $\Pi$ has FII.
\end{theorem}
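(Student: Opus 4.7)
The plan is to combine the finite-state property of CMSO with the separability hypothesis. First, since $\phi$ is CMSO-definable, the equivalence $\equiv_\phi$ restricted to $\hat{\cal F}_I$ has only finitely many classes $[\hat{\bf H}_1],\ldots,[\hat{\bf H}_N]$. I would encode each ${\bf G} \in {\cal F}_I$ by its \emph{profile vector}
\[
\mu_{\bf G}(j) \;=\; \min\bigl\{\, |S| \,:\, ({\bf G}, S) \in [\hat{\bf H}_j]\,\bigr\} \;\in\; \mathbb{Z}_{\geq 0}\cup\{\infty\}, \qquad j=1,\ldots,N,
\]
with $\min\emptyset = \infty$. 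Because $\equiv_\phi$ is preserved under $\oplus$ (the class of $\hat{\bf G}_1 \oplus \hat{\bf G}_2$ depends only on the classes of the summands) and $\phi$ is constant on each class, a short case analysis yields the identity
\[
OPT_\Pi({\bf G}\oplus{\bf F}) \;=\; \min_{(i,j)\in\mathcal{C}}\,\bigl(\mu_{\bf G}(i) + \mu_{\bf F}(j) - \beta(i,j)\bigr),
\]
where $\mathcal{C}$ is the finite, fixed set of pairs whose glued class satisfies $\phi$, and $\beta(i,j)$ is the number of boundary positions annotated by both classes. It is then immediate that if $\mu_{{\bf G}_1} = \mu_{{\bf G}_2} + c$ coordinatewise (with matching infinity pattern, $\infty + c = \infty$), then $OPT_\Pi({\bf G}_1\oplus{\bf F}) = OPT_\Pi({\bf G}_2\oplus{\bf F}) + c$ for every ${\bf F}$, witnessing ${\bf G}_1 \equiv_\Pi {\bf G}_2$ with transposition constant $c$.

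The heart of the proof is to use separability to confine the relevant part of the profile. Fix ${\bf G}$ and ${\bf F}$ with $\Lambda({\bf F})=I$ and let $SOL = SOL_\Pi({\bf G}\oplus{\bf F})$. Applying $f$-separability with $L = V({\bf G})$, and noting $|\partial(L)|\le |I|$, gives
\[
\bigl||SOL\cap V({\bf G})| - OPT_\Pi(G)\bigr| \;\le\; f(|I|).
\]
This $SOL$ corresponds to some attaining pair $(i^*,j^*)$ in the formula. By minimality of $OPT_\Pi({\bf G}\oplus{\bf F})$, $|SOL\cap V({\bf G})|$ must equal $\mu_{\bf G}(i^*)$: otherwise, swapping $SOL\cap V({\bf G})$ for a minimum-size representative of $[\hat{\bf H}_{i^*}]$ (which preserves the $\equiv_\phi$-class, hence also feasibility of the glue) yields a strictly smaller feasible solution for ${\bf G}\oplus{\bf F}$. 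Therefore at least one argmin pair has its ${\bf G}$-coordinate $\mu_{\bf G}(i^*)$ in the window $[OPT_\Pi(G) - f(|I|),\, OPT_\Pi(G) + f(|I|)]$.

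Define the truncated profile $\tilde{\mu}_{\bf G}(j) = \mu_{\bf G}(j)$ if $\mu_{\bf G}(j)$ lies in this window and $\tilde{\mu}_{\bf G}(j) = \infty$ otherwise. The previous paragraph shows that replacing $\mu$ by $\tilde{\mu}$ in the formula does not change $OPT_\Pi({\bf G}\oplus{\bf F})$, since a surviving argmin pair always exists. Hence two boundaried graphs with the same \emph{normalized} truncated profile $\tilde{\mu}_{\bf G} - OPT_\Pi(G)\cdot\mathbf{1}$ are $\equiv_\Pi$-equivalent, with transposition constant $OPT_\Pi(G_1)-OPT_\Pi(G_2)$. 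Each coordinate of this normalized profile lies in the fixed finite set $\{-f(|I|),\ldots,f(|I|)\}\cup\{\infty\}$, so the total number of normalized truncated profiles---and therefore of $\equiv_\Pi$-classes in ${\cal F}_I$---is bounded by $(2f(|I|)+2)^N$, proving FII for separable min-CMSO problems.

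The max-CMSO case is strictly dual: redefine $\mu_{\bf G}(j)$ as a maximum, replace $\min$ by $\max$ and $\infty$ by $-\infty$ throughout, and observe that the two-sided separability bound $|SOL\cap V({\bf G})|\in[OPT_\Pi(G)-f(|I|),OPT_\Pi(G)+f(|I|)]$ confines the argmax's ${\bf G}$-coordinate to the same window, so the identical truncation argument applies. The hard point is the confinement in the middle paragraph---without separability the profile entries could range without bound, and the truncation would not preserve the minimum of the formula. I expect the most delicate point in a careful write-up to be the swap argument justifying $|SOL\cap V({\bf G})| = \mu_{\bf G}(i^*)$, which relies on the fact that $\equiv_\phi$-equivalence of $({\bf G},S)$ and $({\bf G},S')$ is preserved after $\oplus{\bf F}$ and hence transfers feasibility.
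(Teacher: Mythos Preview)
Your approach is essentially the paper's proof: your profile $\mu_{\bf G}$ is exactly the paper's $\xi_{\bf G}$, your normalized truncated profile is the paper's signature $\chi_{\bf G}$, your swap argument is the paper's Claim, and you arrive at the identical bound $(2f(|I|)+2)^{N}$ on the number of classes. The one wrinkle is your intermediate identity $OPT_\Pi({\bf G}\oplus{\bf F})=\min_{(i,j)\in\mathcal C}(\mu_{\bf G}(i)+\mu_{\bf F}(j)-\beta(i,j))$: the term $\beta(i,j)$ is not in general determined by the $\equiv_\phi$-class (take, e.g., $\psi\equiv\textsf{true}$), so as stated the formula needs the equivalence refined to also record $S\cap B$; the paper sidesteps this by running the swap argument directly on $S_1=S\cap V({\bf G}_1)$ and $S^\star=S\setminus S_1$ rather than via an explicit min-formula, but the substance of the argument is the same.
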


\begin{proof} 
We prove the statement for vertex subset CMSO-minimization problems. The proofs for CMSO-maximization problems and the edge subset variants are identical. Let $\consteps$ be the CMSO-predicate defining $\Pi$, that is $(G,k) \in \Pi$ if and only if there exists a vertex set $S$ of size at most $k$ such that $(G,S) \models \consteps$. Let  $I$ be a finite subset of $\Bbb{Z}_{>0}$ and let $\hat{\cal R}_{\consteps,I}$ be a
 set of minimum-size representatives of $\equiv_{\consteps}$ when restricted to $\hat{\cal F}_{I}$, the set of annotated boundaried graphs with label set $I$. 
By Proposition~\ref{propCMSOann},  the set $\hat{\cal R}_{\consteps,I}$  is finite. 
Let $f$ be a function such that $\Pi$ is $f$-separable.
 
Given an annotated  boundaried graph ${\bf G}\in \hat{\cal F}_{I}$, we define the function $\xi_{\bf G}: \hat{\cal R}_{{\consteps},I}\rightarrow \Bbb{Z}_{\geq 0} \cup \{\bot\}$ such that
if  $({\bf G}^*,S^*)\in\hat{\cal R}_{{\cal \consteps},I}$, then
\begin{eqnarray}
\xi_{\bf G}({\bf G}^*,S^*)&=&  {\rm min}\{|S|\mid ({\bf G},S)\equiv_{\consteps}({\bf G}^*,S^*)\}. \label{fpes}
\end{eqnarray}
Here $\xi_{\bf G}({\bf G}^*,S^*) = \bot$ if no set $S$ such that $({\bf G},S)\equiv_{\consteps}({\bf G}^*,S^*)$ exists. We may think of $\xi_{\bf G}$ as a partial function where $\bot$ means that the function is left undefined. We also define   function $\chi_{\bf G}({\bf G}^*,S^*) : \hat{\cal R}_{{\consteps},I} \rightarrow \Bbb{Z} \cup \{\bot\}$ as follows
\begin{eqnarray*}
\chi_{\bf G}({\bf G}^*,S^*)= 
\begin{cases}
\xi_{\bf G}({\bf G}^*,S^*)-OPT_\Pi(G),& \text{if }  \xi_{\bf G}({\bf G}^*,S^*)\in [OPT_\Pi(G)-f(|I|), OPT_\Pi(G)+f(|I|)] \\
    \bot,              & \text{otherwise}.
\end{cases}
%
%
%
\end{eqnarray*}
Thus  $\chi_{\bf G}$ outputs $\bot$ if $\xi_{\bf G} = \bot$ or  when 
$ \xi_{\bf G}({\bf G}^*,S^*)\not\in [OPT_\Pi(G)-f(|I|), OPT_\Pi(G)+f(|I|)]$.
Clearly $\chi_{\bf G}({\bf G}^*,S^*) \in [-f(|I|), f(|I|)] \cup \{\bot\}$ for all choices of $({\bf G}^*,S^*) \in \hat{\cal R}_{{\consteps},I}$. This means that $\chi_{\bf G}$ may take at most $2\cdot f(|I|)+2$ different values, and hence there are at most $(2\cdot f(|I|)+2)^{|\hat{\cal R}_{{\consteps},I}|}$ different functions $\chi_{{\bf G}}$.

Now, we define an equivalence relation $\sim$ on ${\cal F}_{I}$ such that, 
given ${\bf G}_{1},{\bf G}_{2}\in {\cal F}_{I}$, 
$${\bf G}_{1}\sim {\bf G}_{2}\text{~if and only if ~}\chi_{{\bf G}_{1}}=\chi_{{\bf G}_{2}}$$
and keep in mind that $\sim$ has a  finite number of equivalence 
classes.
We will prove that $\sim$  is a refinement  of $ \equiv _{\Pi}.$
For this we prove that if ${\bf G}_{1}\sim {\bf G}_{2},$ then ${\bf G}_{1} \equiv _{\Pi} {\bf G}_{2}.$\medskip

We assume that ${\bf G}_{1}\sim {\bf G}_{2},$ where ${\bf G}_{1}=(G_{1},B_{1})$ and ${\bf G}_{2}=(G_{2},B_{2})$.
We also set  $c=OPT_\Pi(G_{2})-OPT_\Pi(G_{1})$ (notice that $c$ depends only  on ${\bf G}_{1}$ and ${\bf G}_{2}$). 
In what follows, 
we  prove that,  for every
$({\bf F},k)\in {\cal F}_{I}\times  \Bbb{Z}$,  $({\bf G}_{1}\oplus {\bf F},k)\in \Pi $   implies that $({\bf G}_{2}\oplus {\bf F},k+c)\in\Pi $ (the direction $({\bf G}_{1}\oplus {\bf F},k)\not \in \Pi \Rightarrow ({\bf G}_{2}\oplus {\bf F},k+c)\not\in\Pi $ is symmetric).

We consider some $({\bf F},k)\in {\cal F}_{I}\times  \Bbb{Z}$ and we denote $H={\bf G}_{1}\oplus {\bf F}$.
Let $S$ be an optimal solution of $H$. That is, $|S|= OPT_\Pi(H)$ and $(H,S)\models \consteps$. 
The fact that $(H,k)\in \Pi $ means that the size of $S$ is at most $k$. 
We define $S_{1}=S\cap V({\bf G}_{1})$ and $S^{\star}=S\setminus S_{1}$
and we know that $({\bf G}_{1}\oplus {\bf F},S_{1}\cup S^{\star})\models \consteps.$
Notice that $|S_{1}|+|S^{\star}|=|S_{1}\cup S^\star| \leq k$.\medskip

\noindent{\em Claim:} 
There exists no $S_{1}'\subseteq V({\bf G}_{1})$ such that $|S_{1}'| <|S_{1}|$ and $({\bf G}_{1},S_{1}')\equiv_{\consteps} ({\bf G}_{1},S_{1}).$ 

\noindent{\em Proof of Claim:} 
Suppose in contrary that such an $S_1'$ exists.
Since $({\bf G}_{1},S_{1}')\equiv_{\consteps} ({\bf G}_{1},S_{1}),$  we have  that 
$({\bf G}_{1}\oplus {\bf F},S_{1}'\cup S^{\star})\models \consteps.$ However, this implies that 
$|S_{1}'\cup S^{\star}|= |S_{1}'|+|S^{\star}| < |S_{1}|+|S^{\star}|=OPT_\Pi(H)$, which is  a contradiction. This concludes the claim. \qed

%
%

Let $({\bf G}',S')\in\hat{\cal R}_{\consteps,I}$ such that $({\bf G}',S')\equiv_{\consteps}({\bf G}_{1},S_{1}).$ 
By the 
above claim and~\eqref{fpes}, we have that $\xi_{{\bf G}_{1}}({\bf G}',S')=|S_1|.$  Let $A_{1}={\sf heir}(V({\bf G}_{1}))$ and $A_{2}={\sf heir}(V({\bf F}))$ (here by the ${\sf heir}$ of a vertex set we mean the union of all vertex ${\sf heir}$ it contains). Here, function ${\sf heir}$
is defined with respect to the operation $H={\bf G}_{1}\oplus {\bf F}$ and certainly $(A_{1},A_{2})$ is a separation of $H$
of order $|I|$.

Recall that $S_{1}=S\cap A_{1}$ and $|S|= OPT_\Pi(H)$.  
The $f$-separability of $\Pi$ implies that
\begin{eqnarray*}
 |S\cap A_{1}|  \in   [OPT_\Pi(H[A_{1}])-f(|I|),OPT_\Pi(H[A_{1}])+f(|I|)], 
\end{eqnarray*}
and therefore, 
\begin{eqnarray*}
   \xi_{{\bf G}_{1}}({\bf G}',S')   \in   [OPT_\Pi(G_{1})-f(|I|),OPT_\Pi(G_{1})+f(|I|)].
\end{eqnarray*}

Then  $\chi_{{\bf G}_{1}}({\bf G}',S') \neq \bot$ and thus $\chi_{{\bf G}_{1}}({\bf G}',S')=|S_{1}|-OPT_\Pi(G_{1})$. As ${\bf G}_{1}\sim {\bf G}_{2},$ this means that $\chi_{{\bf G}_{2}}({\bf G}',S')=|S_{1}|-OPT_\Pi({\bf G}_{1})$. Since $\chi_{{\bf G}_2}({\bf G}',S') \neq \bot$ it follows that $\chi_{{\bf G}_2}({\bf G}',S') =\xi_{{\bf G}_{2}}({\bf G}',S')-OPT_\Pi(G_{2})$. We conclude that
\begin{eqnarray}
\xi_{{\bf G}_{2}}({\bf G}',S') =  |S_1|+c. \label{j4kls}
\end{eqnarray}
By the definition of $\xi$  and ~\eqref{j4kls},  there exists a set $S_2$ of size $|S_{1}|+c$ such that $({\bf G}_{2},S_{2})\equiv_{\consteps} ({\bf G}',S')$.
As $({\bf G}_{1},S_{1})\equiv_{\consteps}({\bf G}_{2},S_{2}),$ the fact that $({\bf G}_{1}\oplus {\bf F},S_{1}\cup S^{\star})\models \consteps,$
implies that $({\bf G}_{2}\oplus {\bf F},S_{2}\cup S^{\star})\models \consteps.$ 
But then $|S_{2}\cup S^\star|= |S_{2}|+|S^{\star}| = |S_{1}|+c+|S^{\star}|\leq k+c$.
This means that $({\bf G}_{2}\oplus {\bf F},k+c)\in \Pi$. Hence  ${\bf G}_{1} \equiv _{\Pi} {\bf G}_{2}.$ 
\medskip

Thus for every pair of boundaried graphs ${\bf G}_{1}$ and ${\bf G}_{2},$ condition ${\bf G}_{1}\sim {\bf G}_{2},$ yields that  ${\bf G}_{1} \equiv _{\Pi} {\bf G}_{2}.$ Since $\sim$ has a finite number of equivalence classes, we conclude that $\Pi$ has FII.
\end{proof}

We remark that the proof of   Theorem~\ref{fiiopoiok} is similar in spirit to the proof of \cite[Lemma~7.3]{BodlaenderFLPST16meta} with a few key differences.  Being separable is a looser constraint for  CMSO-optimization problems than being strongly monotone (see \cite{BodlaenderFLPST16meta} for the definition.) In particular, separability only puts restrictions on how an \emph{optimum} solution can interact with both sides of the separation. On the other hand, strong  monotonicity  puts constraints on how \emph{any} solution can interact with the two sides.  Therefore, it is often easier to verify that a CMSO-optimization problem is separable than that  it is strongly monotone.
However, strong monotonicity allows us to conclude even stronger properties than FII as observed in 
\cite{FominLMS12}.

\section{Proof of the main theorem: Putting things together}\label{sec:puttin}
Now everything prepared to pipeline the results about protrusion decomposition and FII  with the framework from  \cite{BodlaenderFLPST16meta}. We start from the following definitions. 

\begin{definition}{\rm [\bf $(f,a)$-protrusion replacement family]}
Let $\Pi$ be a parameterized graph problem, let $f:\Bbb{Z}^{+}\rightarrow \Bbb{Z}^{+}$ be a non-decreasing function and let $a\in\Bbb{Z}^{+}.$
An {\em $(f,a)$-protrusion replacement family}  for $\Pi$ is a collection ${\cal A}=\{{\sf A}_{i}\mid i\geq 0\}$ of algorithms, 
such that  algorithm ${\sf A}_{i}$ receives as input a pair $(I,X),$ where 
\begin{itemize}
\item  $I$ is an instance  of $\Pi$ whose graph and parameter are $G$ and $k\in\Bbb{Z},$  
\item $X$ is an $i$-protrusion  of $G$ with at least $f(i)\cdot k^{a}$
vertices, 
\end{itemize}
and outputs
an equivalent instance $I^{*}$
such that, if $G^*$ and $k^*$  are the graph and the parameter 
of $I^*,$  then  $|V(G^{*})|<|V(G)|$ and $k^{*}\leq k.$
\end{definition}



The following  two properties for a  
parameterized graph problem $\Pi$ were defined  in \cite{BodlaenderFLPST16meta}. 
\begin{itemize}
\item[{\bf A}] {\rm [}{\bf Protrusion replacement}:{\rm ]} There exists an  $(f,a)$-protrusion replacement family ${\cal A}$ for $\Pi,$ for some function $f: \Bbb{Z}^{+}\rightarrow \Bbb{Z}^{+}$
and some $a\in\Bbb{Z}^{+}.$
\item[{\bf B}] {\rm [}{\bf Protrusion decomposition:}{\rm ]} There exists a constant $c$ such that, 
if  $G$ and $k\in\Bbb{Z}^+$ are the graph and the parameter of a \yesinstance of $\Pi$
then $G$ admits a $(c\cdot OPT_\Pi(G),c)$-protrusion decomposition.
\end{itemize}

Our kernelization result  is based on the following master theorem from \cite{BodlaenderFLPST16meta}.

\begin{theorem}
\label{master1}
If a parameterized graph problem $\Pi$ has property {\bf A} for some nonnegative constant $a$ 
and property {\bf B} for some constant $c,$ then $\Pi$ admits a kernel of size $O(k^{a+1}).$ 
\end{theorem}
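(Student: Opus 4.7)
\textbf{Proof plan for Theorem~\ref{master1}.}

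The plan is to use property \textbf{B} structurally and property \textbf{A} algorithmically: first decompose $G$ into a bounded-cardinality collection of bounded-treewidth pieces (protrusions) sharing a small core, then shrink each individually-large protrusion using the replacement family until no large protrusion remains. Concretely, given an input $(G,k)$, the kernelization algorithm first attempts, in polynomial time, to compute a $(c\cdot k,c)$-protrusion decomposition $\mathcal{P}=\{R_0,R_1,\ldots,R_\ell\}$ of $G$ witnessing property \textbf{B}. If no such decomposition can be produced, then by the contrapositive of \textbf{B} the instance is a \noinstance, and we output a fixed small \noinstance. Otherwise we proceed to Phase 2.

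In Phase 2, for each $i\in\{1,\ldots,\ell\}$, while $|R_i^+|\geq f(c)\cdot k^a$ the algorithm invokes the replacement algorithm ${\sf A}_{c}$ from property \textbf{A} on the pair $((G,k),R_i^+)$, producing an equivalent instance $(G^*,k^*)$ with $|V(G^*)|<|V(G)|$ and $k^*\leq k$. Because the replacer preserves the boundary of $R_i^+$ and acts only on its interior, the decomposition $\mathcal{P}$ remains a $(c\cdot k,c)$-protrusion decomposition of the updated graph (with the shrunk $R_i^+$ in place of the old one and all other $R_j^+$ and the core $R_0$ untouched), and the new parameter $k^*\leq k$ does not increase the promise bounds. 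Moreover, since $|V(G)|$ strictly decreases with each call and $k$ never increases, the process terminates in polynomially many iterations.

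Once Phase 2 stabilises, every protrusion $R_i^+$ has fewer than $f(c)\cdot k^a$ vertices, so the reduced graph $G'$ satisfies
\[
|V(G')| \;=\; |R_0| + \sum_{i=1}^{\ell}|R_i^+\setminus R_0| \;\leq\; c\cdot k \;+\; (c\cdot k)\cdot f(c)\cdot k^{a} \;=\; \Oh(k^{a+1}),
\]
with the parameter still bounded by $k$, which is the desired kernel size.

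The main obstacle is the algorithmic content of Phase 1: property \textbf{B} is only an existential statement, but the kernel must actually exhibit a decomposition (or safely refute its existence) in polynomial time. The standard way around this is to strengthen the decomposition proof constructively, namely to trace through the proof of Theorem~\ref{thm:protrusiondecomp} using a constant-factor approximation for treewidth to replace the balanced separators of Proposition~\ref{lemma:balsep22}, yielding an $\Oh(k)$-sized \twetamod\ algorithmically, and then mimicking Lemma~\ref{lem:protrus_decomp} to turn that modulator into the protrusion decomposition. A secondary subtlety is that property \textbf{A} demands identification of $c$-protrusions as input to ${\sf A}_c$, but these are exactly the sets $R_i^+$ already produced by Phase 1, so no extra work is needed for this step.
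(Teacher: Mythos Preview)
The paper does not actually prove Theorem~\ref{master1}; it imports it verbatim from \cite{BodlaenderFLPSTmetaArxiv}. So there is no in-paper proof to compare against, but your plan can still be assessed against the argument in that reference, and it has two genuine gaps.

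First, your Phase~2 assumes that the protrusion replacer ``preserves the boundary of $R_i^{+}$ and acts only on its interior,'' so that $\mathcal P$ remains a $(ck,c)$-protrusion decomposition after each call. Nothing in the definition of an $(f,a)$-protrusion replacement family guarantees this: it only promises an \emph{equivalent} instance $(G^*,k^*)$ with $|V(G^*)|<|V(G)|$ and $k^*\le k$. The new graph $G^*$ need not contain the old $R_0$, nor need the other $R_j$ survive as protrusions. So you cannot maintain a single decomposition across iterations.

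Second, you correctly flag that property~\textbf{B} is purely existential, but your proposed fix --- trace through Theorem~\ref{thm:protrusiondecomp} and Lemma~\ref{lem:protrus_decomp} constructively --- proves a different theorem. Theorem~\ref{master1} is stated for \emph{any} problem with properties~\textbf{A} and~\textbf{B}, not only for bidimensional separable problems on SQGM classes; invoking those lemmas bakes in hypotheses the master theorem does not have.

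The approach in \cite{BodlaenderFLPSTmetaArxiv} avoids both issues by decoupling the algorithm from the decomposition. The algorithm never computes $\mathcal P$: it simply searches $G$ for \emph{some} $c$-protrusion of size at least $f(c)\cdot k^{a}$ (this can be done in polynomial time by enumerating candidate boundaries of size $\le c$ and checking treewidth of the pieces), applies ${\sf A}_c$ to it, and repeats on the resulting instance from scratch. Termination follows because $|V(G)|$ strictly drops. Property~\textbf{B} is used only in the \emph{analysis} of the final irreducible instance $(G',k')$: if $(G',k')$ is a \yesinstance, then $G'$ admits a $(ck',c)$-protrusion decomposition, and since $G'$ contains no $c$-protrusion of size $\ge f(c)\cdot (k')^a$, every $R_i^+$ in that (unknown) decomposition is small, forcing $|V(G')|=\Oh((k')^{a+1})\le \Oh(k^{a+1})$. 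Hence if $|V(G')|$ exceeds this bound one may safely return a trivial \noinstance. This way the existential nature of \textbf{B} is never an obstacle, and no structural assumption on the replacer is needed.
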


The following lemma proven in   \cite{BodlaenderFLPST16meta}, shows that every parameterized problem that has  FII admits 
$(f,0)$-protrusion replacement families.
\begin{lemma}
\label{lem:fiiwithaequal0}
Every parameterized graph problem $\Pi$ that  has {FII} has the  protrusion  replacement  property  {\bf A}  for $a=0.$
\end{lemma}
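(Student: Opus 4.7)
The plan is, for each boundary size $i\geq 0$, to build an algorithm ${\sf A}_i$ which takes an $i$-protrusion $X$ of size at least a threshold $f(i)$ and substitutes it in-place by a canonical smaller boundaried graph from the same $\equiv_\Pi$-class, adjusting the parameter accordingly. Applying FII to the label set $I=\{1,\ldots,i\}$, the relation $\equiv_\Pi$ restricted to ${\cal F}_{\subseteq I}$ has only finitely many classes $\kappa_1,\ldots,\kappa_m$. For each $\kappa_j$, I would fix a representative ${\bf G}^*_j\in \kappa_j$ chosen to minimize $OPT_\Pi$ of the underlying graph (a ``progress'' measure), and among such, of minimum vertex count; setting $f(i):=1+\max_j |V({\bf G}^*_j)|$ yields a finite threshold depending only on $i$ and $\Pi$.

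The algorithm ${\sf A}_i$, on input an instance $(G,k)$ and an $i$-protrusion $X$ with $|X|\geq f(i)$, first views $G[X]$ (with an arbitrary injective labeling of $\partial_G(X)$ into $\{1,\ldots,i\}$) as an $i$-boundaried graph ${\bf G}_X$ of treewidth at most $i$. It then computes both the $\equiv_\Pi$-class $\kappa_j$ of ${\bf G}_X$ and the transposition constant $c=c({\bf G}_X,{\bf G}^*_j)$ by a bottom-up dynamic programming on a width-$i$ tree decomposition of $G[X]$: FII implies that each class, and the associated transposition constant, is captured by a finite-state tree automaton depending only on $i$, whose transition table can be precomputed once by exhaustive gluing tests against the fixed list of representatives. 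Finally, the algorithm returns $(G',k+c)$, where $G'$ is obtained from $G$ by replacing ${\bf G}_X$ with ${\bf G}^*_j$ while keeping the boundary labeling intact; if we decompose $G={\bf G}_X\oplus {\bf H}$ for the ``external'' piece ${\bf H}$, then $G'={\bf G}^*_j\oplus {\bf H}$.

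Equivalence of $(G,k)$ and $(G',k+c)$ is immediate from the definition of $\equiv_\Pi$, and the strict size drop $|V(G')|<|V(G)|$ follows from $|V({\bf G}^*_j)|<f(i)\leq |V({\bf G}_X)|$. The main obstacle is enforcing the parameter requirement $k'\leq k$, i.e., $c\leq 0$. This is precisely why ${\bf G}^*_j$ is chosen as a minimum-$OPT_\Pi$ representative: since $OPT_\Pi$ of any vertex/edge subset instance is a nonnegative integer, it is bounded below on each equivalence class and the minimum is attained; hence for every input ${\bf G}_X\in \kappa_j$ we have $OPT_\Pi({\bf G}_X)\geq OPT_\Pi({\bf G}^*_j)$, which in the sign convention of $\equiv_\Pi$ yields $c = OPT_\Pi({\bf G}^*_j)-OPT_\Pi({\bf G}_X) \leq 0$. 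A secondary subtlety is making FII's existence statement effective: this is resolved by a one-time precomputation (for each $i$ and $\Pi$) of the automaton's transition table, after which class identification and computation of $c$ run in time linear in $|X|$ for fixed $i$.
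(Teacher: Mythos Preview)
The paper does not give its own proof of this lemma; it simply cites \cite{BodlaenderFLPSTmetaArxiv}. Your outline follows exactly the standard replacement argument used there: hardcode a finite list of $\equiv_\Pi$-representatives for each boundary size, recognise the class of the protrusion by bounded-width dynamic programming, and swap it for the representative while shifting the parameter by the transposition constant. For equivalence of the two instances and the strict vertex-count drop your argument is fine.

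The genuine gap is your justification of $k^\ast\le k$, i.e.\ $c\le 0$. You enforce this by choosing the representative to minimise $OPT_\Pi$ and then asserting $c=OPT_\Pi({\bf G}^*_j)-OPT_\Pi({\bf G}_X)$. But $OPT_\Pi$ is only defined in this paper for vertex/edge subset minimisation or maximisation problems, whereas Lemma~\ref{lem:fiiwithaequal0} is stated for \emph{arbitrary} parameterized graph problems with FII. For a general $\Pi$ there is no $OPT_\Pi$, and even for subset problems your identity silently assumes finiteness of $OPT_\Pi$ on both sides and that the transposition constant is unique (which fails on the trivially-yes and trivially-no classes, where any $c$ works). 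So as written you have proved a special case that happens to suffice for Theorem~\ref{thm:main_result_bidim}, but not the lemma as stated.

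The fix used in \cite{BodlaenderFLPSTmetaArxiv} does not rely on $OPT_\Pi$ at all; it exploits the paper's extended definition of parameterized problems (all instances with negative parameter have the same yes/no status). Concretely, fix any ${\bf G}_0\in\kappa_j$ and any ${\bf F}_0$ with $({\bf G}_0\oplus{\bf F}_0,k)\in\Pi$ for some but not all $k$; if no such ${\bf F}_0$ exists the class is trivial and one may take $c=0$. Otherwise let $t({\bf G})=\min\{k\ge 0:({\bf G}\oplus{\bf F}_0,k)\in\Pi\}$ (respectively $\max$, in the ``all negatives yes'' case). The negative-parameter convention forces $t({\bf G})\ge 0$, hence $c({\bf G},{\bf G}_0)=t({\bf G}_0)-t({\bf G})\le t({\bf G}_0)$ is bounded above on $\kappa_j$ by a constant independent of ${\bf G}$. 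One then picks ${\bf G}^*_j$ attaining this maximum (among such, of minimum size), guaranteeing $c({\bf G},{\bf G}^*_j)\le 0$ for every ${\bf G}\in\kappa_j$. Your constructivity remark (``precomputed once by exhaustive gluing tests'') also overshoots: FII by itself gives only the \emph{existence} of the automaton and the representative list; the algorithm ${\sf A}_i$ has them hardcoded non-uniformly, which is all the lemma needs.
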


\begin{proof}[Proof of Theorem~\ref{thm:main_result_bidim}]
Let $\Pi$ be a  CMSO-definable linear-separable minor-bidimensional problem on an $H$-minor-free  graph class ${\cal G}$ for some fixed graph $H$. By Proposition~\ref{prop:linear_grid_minor}, 
${\cal G}$  has the SQGM property.

By Theorem~\ref{fiiopoiok}, every 
 separable CMSO-minimization or CMSO-maximization problem   $\Pi$   has FII. Thus by Lemma~\ref{lem:fiiwithaequal0}, $\Pi$   has the  protrusion  replacement  property  {\bf A}  for $a=0.$
By Theorem~\ref{thm:protrusiondecomp},  $\Pi$ has the  protrusion decomposition property  {\bf B}.
  By Theorem~\ref{master1}, $\Pi$ admits a linear kernel.

 The proof for contraction-bidimensional problems is almost identical.
\end{proof}

\section{Conclusion}\label{sec:conclusion}
We conclude with a discussion on the relation between  the new and previous kernelization meta-theorems, the running times of kernelization algorithms and an open question.

 \medskip\noindent\textbf{Bidimensionality vs. 
quasi-coverability.}
Let us finally note how the techniques developed to prove the kernelization meta-theorem 
in this paper, can be used to refine  the results from   \cite{BodlaenderFLPST16meta}.

Let $r$ be a non-negative integer. 
We say that a parameterized graph problem $\Pi$
has the  {\emph{radial $r$-coverability property} }  if all  
\yesinstances of $\Pi$ encode graphs embeddable 
in some surface of Euler genus at most $r$ and there exist such an embedding of $G$
and  a
set $S \subseteq V(G)$ such that $|S|\leq r \cdot k$ and  ${\bf R}_{G}^{r}(S)=V(G)$ (here we denote by ${\bf R}_{G}^{r}(S)$ the set of all vertices of $G$ that are within radial distance\footnote{The {\em radial distance} 
between two vertices $x,y$ in a surface-embedded graph is one less than the minimum size of a sequence 
of alternating vertices and faces, starting on $x$ and finishing on $y$, where if an vertex $v$ and a face $f$ appear consecutively in the sequence, then $v$ is a vertex of the boundary of $f$.}  at most $r$ from some vertex in $S$).
 A parameterized graph problem $\Pi$ has the {\emph{radial $r$-quasi-coverability property}}  
if  all  
\yesinstances of $\Pi$ encode graphs embeddable 
in some surface of Euler genus at most $r$ and there exist such an embedding 
and  a
set $S \subseteq V(G)$ such that $|S|\leq r \cdot k$ and $\tw(G- {\bf R}_{G}^{r}(S))\leq r$.
 Every problem $\Pi$ that has the radial $r$-coverability property is radially $r$-quasi-covervable. The converse is not necessarily true. 
 

The main two meta-theorems from  \cite{BodlaenderFLPST16meta} say that 
quasi-coverable problems with finite   integer index admit   linear kernels  on graphs of bounded genus
and that coverable CMSO-definable problems admit  polynomial kernels on graphs of bounded genus.
By making use of SQGC property of graphs of bounded genus, it is not difficult to prove  that if a set $S$ is such that  $\tw(G- {\bf R}_{G}^{r}(S))\leq r$, then   $S$ is also a \twetamod for $\eta =\cO(r)$. The following lemma allows to  refine the results from 
 \cite{BodlaenderFLPST16meta}.
\begin{lemma}
If $\Pi$ is a quasi-coverable problem on graphs of bounded genus, then there exists a contraction-bidimensional separable problem $\Pi^*$ such that if $(G,k)\in \Pi$, then $OPT_{\Pi^*}(G)=\cO(k)$. 
\end{lemma}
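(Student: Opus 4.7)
The plan is to take $\Pi^{*}$ to be the \textsc{Treewidth-$\eta$-Modulator} problem for a constant $\eta$ that will depend only on $r$ and on the Euler genus bound $g$ of the class on which $\Pi$ lives. The bidimensionality and linear separability of \textsc{Treewidth-$\eta$-Modulator} have already been established earlier in the paper (in the paragraph preceding Observation~\ref{obs:etaTransversalTreewidth}), so these two requirements for $\Pi^{*}$ come for free. The only nontrivial task is to show that $OPT_{\Pi^{*}}(G)=\cO(k)$ whenever $(G,k)\in\Pi$, that is, that every such $G$ admits a \twetamod\ of size $\cO(k)$.

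To construct this modulator, I would unpack the definition of $r$-quasi-coverability: there is an embedding of $G$ into a surface of Euler genus at most $g$ and a set $S\subseteq V(G)$ with $|S|\le r\cdot k$ and $\tw(G-\mathbf{R}_{G}^{r}(S))\le r$. The idea is to enlarge $S$ by a small ``radial boundary'' $B\subseteq \mathbf{R}_{G}^{r}(S)$ so that $T=S\cup B$ cuts $G$ into pieces of bounded treewidth. Concretely, I plan to invoke the topological argument from~\cite{BodlaenderFLPSTmetaArxiv} that derives protrusion decompositions from quasi-coverability: for every such $S$ in an embedded bounded-genus graph one can produce a set $B\subseteq\mathbf{R}_{G}^{r}(S)$ of size linear in $|S|$ (with a multiplicative constant depending only on $r$ and $g$) such that every connected component of $G-(S\cup B)$ either lies inside $\mathbf{R}_{G}^{r}(S)$ and has treewidth bounded by some function $f(r,g)$, or lies in $G-\mathbf{R}_{G}^{r}(S)$ and hence has treewidth at most $r$. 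Setting $\eta=\max\{r,f(r,g)\}$, the set $T=S\cup B$ is a \twetamod\ of $G$ of size $\cO(|S|)=\cO(k)$, and hence $OPT_{\Pi^{*}}(G)\le|T|=\cO(k)$.

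The main obstacle is exactly the topological fact behind the construction of $B$: on surfaces of bounded Euler genus, the radial $r$-neighborhood of any $S$ must be separable from its complement by a vertex set whose size is linear in $|S|$, and simultaneously the portion trapped inside the neighborhood must have treewidth bounded purely in terms of $r$ and $g$. These are precisely the two ingredients powering the protrusion-decomposition machinery for quasi-coverable problems in~\cite{BodlaenderFLPSTmetaArxiv}. The cleanest way to carry out the proof is therefore to apply that theorem directly to $(G,k)$, obtain an $(\cO(k),\eta)$-protrusion decomposition of $G$, and simply read off the core of the decomposition as the witness $T$: the core has size $\cO(k)$ by definition, and by construction $G$ minus the core is a disjoint union of interiors of $\eta$-protrusions, each of treewidth at most $\eta$, so the core is the desired \twetamod.
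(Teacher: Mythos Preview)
Your argument is technically correct but takes a different route from the paper, and that route partly undercuts the purpose of the lemma.

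The paper does \emph{not} take $\Pi^{*}$ to be \textsc{Treewidth-$\eta$-Modulator}. Instead it defines $\Pi^{*}$ directly from the quasi-coverability witness: $(G,k)\in\Pi^{*}$ iff there is $S\subseteq V(G)$ with $|S|\le k$ and $\tw(G-{\bf R}_{G}^{r}(S))\le r$. With this choice the bound $OPT_{\Pi^{*}}(G)\le r\cdot k$ for $(G,k)\in\Pi$ is \emph{immediate} from the definition of $r$-quasi-coverability---no topological boundary-extraction, no invocation of~\cite{BodlaenderFLPSTmetaArxiv}. What then needs to be checked is that this $\Pi^{*}$ is contraction-bidimensional and separable, and that goes through by the same elementary arguments as for \textsc{Treewidth-$\eta$-Modulator} (a grid needs many centres to be radially $r$-covered after removing a small-treewidth part; adding the separator to $S$ handles separability).

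Your approach swaps the easy and the hard parts: you get bidimensionality and separability for free, but then you need to manufacture a \twetamod\ of size $\cO(k)$. You do this by appealing to the protrusion-decomposition machinery of~\cite{BodlaenderFLPSTmetaArxiv} as a black box and reading off the core. That is valid as a proof of the bare statement, but look at where this lemma sits: the surrounding text is explaining how the techniques of \emph{this} paper re-derive the protrusion decomposition (and hence the linear kernels) of~\cite{BodlaenderFLPSTmetaArxiv}. The lemma is the bridge; right after it, the paper applies Lemma~\ref{lem:protrus_decomp} to $\Pi^{*}$ to \emph{obtain} a protrusion decomposition for $\Pi$. If you prove the bridge by already assuming the protrusion decomposition of~\cite{BodlaenderFLPSTmetaArxiv}, the whole passage becomes circular. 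The paper's choice of $\Pi^{*}$ avoids this entirely, since it never touches the topological decomposition arguments.
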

\begin{proof}
The lemma follows from the following observation. Let $\Pi$ be an $r$-quasi-coverable problem. 
Consider the following problem $\Pi^*$:  for a  graph $G$ of genus $g$, pair $(G,k)\in \Pi^*$ if and only if there is a subset of vertices $S\subseteq V(G)$  such that  $|S|\leq r \cdot k$ and  $\tw(G- {\bf R}_{G}^{r}(S))\leq r$. 
 Because $\Pi$ is  $r$-quasi-coverable, we have that if $(G,k)\in \Pi$, then $OPT_{\Pi^*}(G)\leq  k$.
As in the case with \textsc{Treewidth-$\eta$-Modulator},  it is easy to see that $\Pi^*$ is contraction-bidimensional and separable. 
\end{proof}

By   Lemma~\ref{lem:protrus_decomp}, there is a protrusion decomposition for $\Pi^*$ and thus there is a protrusion decomposition  for $\Pi$.  
This implies one of the main 
result of  \cite{BodlaenderFLPST16meta}: Every  quasi-coverable problem with finite   integer index admits a linear kernel on graphs of bounded genus. This also can be used to show that every  quasi-coverable CMSO-definable problem admits a polynomial kernel on graphs of bounded genus. Thus the results of this paper subsume all the linear and polynomial kernels on graphs of bounded genus from \cite{BodlaenderFLPST16meta}. We refer to \cite{BodlaenderFLPST16meta}
for the list of these problems. 

  \medskip\noindent\textbf{Running time.} In Theorem~\ref{thm:main_result_bidim} we do not specify the running time of our kernelization algorithms. 
 The running time of such   algorithms depends on the running time hidden in the  lemma from \cite{BodlaenderFLPST16meta} (Lemma~\ref{master1}), which in turn, depends on how fast one can identify and replace protrusions in a graph. By 
applying   the fast ``protrusion replacer" from \cite{fomin2015solving},  see also \cite[Chapter~16]{kernelizationbook19}, it is possible to achieve kernelization algorithms in Theorem~\ref{thm:main_result_bidim} which run in time 
 linear  in the input size. 

\medskip

 \medskip\noindent\textbf{Open question.}
We conclude with the following open question. For separable contraction-bidimensional CMSO-definable problems, the technique developed in this paper yields the existence of a linear  kernel on $H$-minor-free graphs only when $H$ is an apex graph. An interesting  open question is to identify  general logic and combinatorial conditions for contraction-bidimensional problems which yield  a polynomial  kernelization on $H$-minor-free graphs. For some contraction-bidimensional problems, like \textsc{Dominating Set} or \textsc{Connected Dominating Set}, linear kernels are known to exist  
for
$H$-minor-free graphs and  
  $H$-topological-minor-free graphs~\cite{FominLST18kern}, and even for graphs of bounded expansion \cite{Drange16}, see also 
   \cite{EickmeyerGKKPRS17neig} but we are still far from a general understanding of the complexity of kernelization for such problems beyond apex-minor-free graphs.

\bibliographystyle{siam}


\end{document}